\newcommand{\smsm}[1]{\smash{#1}}
\newcommand{\set}[1]{{\,#1\,}}
\newcommand{\mns}{{-}}
\newcommand{\pls}{{+}}
\newcommand{\NonInt}{\Delta}
\newcommand{\wmoda}[1]{(#1)_\NonInt}
\newcommand{\wmod}{w_\NonInt}
\newcommand{\wP}{(P)}
\newcommand{\infinity}{\infty}
\newcommand{\rmdef}{\stackrel{\mbox{\em {\tiny def}}}{=}}
\renewcommand{\S}{\mbox{\large $\rhd\!\!\!\!\!\lhd$}}
\newcommand{\sync}[1]{\raisebox{-1.0ex}{$\;\stackrel{\S}{\scriptscriptstyle
#1}\,$}}
\newcommand{\syncstar}{\smash{\sync{\rule{0pt}{2.8pt}}}\hspace*{-8.7pt}\raisebox{-3.0pt}[0pt][0pt]{$\scriptscriptstyle *$}\hspace*{7pt}}
\newtheorem{definition}{Definition}[section]
\newtheorem{theorem}{Theorem}[section]
\newtheorem{proposition}{Proposition}[section]
\newcommand{\D}{\displaystyle} 
\newcommand{\calC}{\mathcal C}
\newcommand{\calA}{\mathcal A}
\newcommand{\calR}{\mathcal R}
\newcommand{\calF}{\mathcal F}
\newcommand{\calV}{\mathcal V}
\newcommand{\calK}{\mathcal K}
\newcommand{\calN}{\mathcal N}
\newcommand{\calS}{\mathcal S}
\newcommand{\op}{\ \texttt{op}\ }
\newcommand{\opns}{\texttt{op}}
\newcommand{\opnse}{\emph{\texttt{op}}}
\newcommand{\cop}{\texttt{:op}}
\newcommand{\cope}{\emph{\texttt{:op}}}
\newcommand{\dc}{{::}}
\newcommand{\reactant}{\mbox{$\downarrow$}}
\newcommand{\product}{\mbox{$\uparrow$}}
\newcommand{\modifier}{\odot}
\newcommand{\activator}{\oplus}
\newcommand{\inhibitor}{\ominus}
\newcommand{\SE}{\mathit{SE}}
\newcommand{\EI}{\mathit{EI}}
\newcommand{\sst}{\scriptstyle}
\newcommand{\xyrightarrow}[1]{\xrightarrow{\raisebox{-0.9pt}[0pt][0pt]{$\scriptstyle\, #1$\,}}}
\newcommand{\xyRightarrow}[1]{\xRightarrow{\raisebox{-0.0pt}[0pt][0pt]{$\scriptstyle\, #1$\,\,}}}
\newcommand{\Af}{\calA_f}
\newcommand{\As}{\calA_s}
\newcommand{\mm}{{-}}
\newcommand{\mpl}{{+}}
\newenvironment{itemizeless}%
{\begin{list}{$\bullet$}{\setlength{\topsep}{2pt}
               \setlength{\partopsep}{0pt}
               \setlength{\itemsep}{1pt}
               \setlength{\parsep}{1pt}}}
{\end{list}}
\newcounter{countitems}
\newenvironment{enumerateless}%
{\begin{list}{\arabic{countitems}.}{\usecounter{countitems}%
               \setlength{\topsep}{2pt}
               \setlength{\partopsep}{0pt}
               \setlength{\itemsep}{1pt}
               \setlength{\parsep}{1pt}}}
{\end{list}}
\title{A semi-quantitative equivalence for abstracting from fast reactions}
\author{Vashti Galpin$^1$ \qquad\quad Jane Hillston$^{1,2}$
\qquad Federica Ciocchetta
\institute{$^1$ LFCS, School of
Informatics, University of Edinburgh}
\institute{$^2$ CSBE, University of Edinburgh}
\email{Vashti.Galpin@ed.ac.uk, Jane.Hillston@ed.ac.uk}}
\begin{document}

\maketitle
\begin{abstract}
Semantic equivalences are used in process algebra to capture the notion of similar behaviour, and this paper proposes a semi-quantitative equivalence for a stochastic process algebra developed for biological modelling. We consider abstracting away from fast reactions as suggested by the Quasi-Steady-State Assumption. We define a fast-slow bisimilarity based on this idea. We also show congruence under an appropriate condition for the cooperation operator of Bio-PEPA.  The condition requires that there is no synchronisation over fast actions, and this distinguishes fast-slow bisimilarity from weak bisimilarity.  We also show congruence for an operator which extends the reactions available for a species.  We characterise models for which it is only necessary to consider the matching of slow transitions and we illustrate the equivalence on two models of competitive inhibition.  \end{abstract}

\section{Introduction}\label{sec:introduction}

One of the features of process algebra is \emph{behavioural or semantic
equivalence} \cite{Miln89,Hill96} which determines if two processes
act the same way. Furthermore, \emph{congruence} is of interest where the
behaviours of two equivalent systems are indistinguishable within
any context, in the sense of combining systems using one or more
operators of the process algebra. Notions of equivalence can be
also useful in \emph{systems biology}. For example, equivalences can be
used to compare the behaviour of two systems or parts of them, and to
show the consistency between different abstractions of the same
system. Furthermore, if congruence holds
then it may be possible to replace a (part of the) system with a smaller
equivalent component without changing the behaviour, and thus reduce
the state space.

We focus on Bio-PEPA \cite{CiocCH:09}, a process algebra 
defined for modelling and analysis of biochemical networks.
Bio-PEPA models have a number of interpretations, including ordinary
differential equations (ODEs) and stochastic simulation \cite{GillG:77}. 
We develop our equivalence in the context of mapping a
Bio-PEPA model to a \emph{finite transition system} (which can be interpreted
as a finite \emph{continuous-time Markov chain} (CTMC)). Such models are
called
\emph{Bio-PEPA models with levels} since 
we assume a
maximum quantity for each species, and we stratify molecule counts or
discretise concentrations into levels, resulting in a finite and
tractable transition system, thus ameliorating the state space explosion
problem.

Here we apply a traditional technique of process algebras, namely
semantic equivalence to Bio-PEPA with levels.  \textit{Isomorphism} and
\textit{strong equivalence} (adapted from PEPA \cite{Hill96}) have
both been defined for Bio-PEPA \cite{CiocCH:09,GalpG:10}, but both
relations are very strong notions of equivalence and not able
to capture biological behaviour of interest.

By contrast, our approach is \emph{semi-quantitative} in that we
consider \emph{relative} rather than actual speeds of reactions,
and use this to determine whether two models have similar behaviour
by abstraction from the faster reactions. The more abstract model
has fewer species and hence fewer parameters. This reduced model
can be parameterised more straightforwardly when
there is limited experimental data.

We define \emph{fast-slow bisimilarity} for Bio-PEPA inspired by biology.
The motivation comes from the \emph{Quasi-Steady-State Assumption} (QSSA)
\cite{SegeS:93} which may reduce systems of ODEs
where there are large differences in reaction rates. This is achieved by
abstracting from fast reactions, by assuming almost no change in
the amount of intermediate products (effectively assuming a steady
state for these products), obtaining fewer reactions and a smaller
system of ODEs.  The rates of reactions in the reduced system are
no longer based on mass action but are defined in more complex
manner which is determined in the derivation of the reduced system of
ODEs. It is then possible to work with the reduced system as a model
thereby requiring fitting of fewer parameters.

As with any semantic equivalence we investigate congruence. Fast-slow
bisimilarity is shown to be a congruence with respect to cooperation
under the condition that there are no fast actions that can occur
between pairs of components. This restriction for congruence is not a
significant limitation as it has been shown that introducing
other fast reactions (beyond those that are considered by QSSA)
involving the species to which QSSA is applied can
drastically reduce the accuracy of the QSSA \cite{SanfSGP:10}.
Fast-slow bisimilarity
is similar in definition to an existing equivalence called \emph{weak
bisimilarity} where the behaviour of silent or invisible $\tau$
actions is abstracted away.  However, the additional condition
needed to show congruence distinguishes them. Using an existing
technique \cite{GomeGVT:08}, we are able to show that for certain
reduced models, it is possible to work with a definition of bisimilarity
which only considers slow reactions.

The rest of the paper is structured as follows.  An introduction
to QSSA follows, then Bio-PEPA is introduced and the definition of
fast-slow equivalence is proposed.  Congruence is proved for
cooperation and the extension operator. Slow bisimulation is defined
and conditions identified for which this is sufficient, followed
by an example to illustrate our equivalence. Finally
related and further work is discussed and concluding remarks are
given.

\section{Quasi-Steady-State Assumption}\label{sec:qssa}

In this section, we consider an existing approach to model reduction
which reduces the number of species to be considered and determines
new reaction rates.  First, consider a set of non-oscillating reactions. We can
identify a set of reactants that are present before the reactions
start, say $\Xi$ and a set of products $\Psi$ that are present once
all reactions have completed.  However, complexes may be created
during the reactions -- these are called \emph{intermediate species}, and
we use $\Upsilon$ for the set of these species. We will also call
the species in $\Phi = \Xi \cup \Psi$ \emph{non-intermediate}.  Note that
$\Upsilon \cap \Phi = \emptyset$ but we cannot assume that $\Xi
\cap \Psi$ is empty since modifiers such as enzymes may appear in
both.

In cellular systems, biochemical reactions can happen on very
different time scales. There can be very frequent reactions
(\textit{fast reactions}) and less frequent reactions (\textit{slow
reactions}). In this case, we can apply the Quasi-Steady-State assumption
\cite{SegeS:93}
which is a time scale separation approach. We discuss other
time scale separation/decomposition techniques in
Section~\ref{sec:relwork}.

If fast reactions lead to the production of intermediate species,
then the instantaneous rates of change of the intermediate species
in the reaction are approximately equal to zero, with respect to
the slow reactions, so they can be viewed as being at steady state.
The pre-steady-state transient period before this happens is much
shorter compared to the time taken for slower reactions, and since
this period is typically of less interest, inaccuracies are less important.

Specifically, we assume species $X_i$,
$i=1,\ldots,n$ with $\Upsilon = \{X_{j_1},\ldots,X_{j_m}\}$ the $m$
intermediate species. The equation\footnote
{Here the concentration of the species $X$ is represented by the
variable $X$.} $dX_{j_k}/dt = f_{j_k}(X)$ (where
$f_{j_k}(X)$ stands for a mathematical expression describing the
dynamics of $X_{j_k}$ in terms of all species) is assumed to be
approximately equal to zero, and therefore $f_{j_k}(X) \approx 0$.
From these equations, it may be possible to derive 
an expression for intermediate species in
terms of other species.  These expressions can replace
the variables representing intermediate species in the rate equations
of other species.  The resulting ODE system has fewer equations and
terms (reactions).  Therefore, the effect of the application of
QSSA is to simplify the model complexity.  This kind of approximation
and resulting reduction is useful when there are many species and
hence many ODEs; when systems of ODEs are stiff (they have widely
different rates) and hence are difficult to solve numerically
\cite{CaoCGP:05a}; and when it is difficult experimentally to obtain
rate parameters for intermediate species.

As an example, consider the reactions $S + E
\xleftrightarrow{R_1,R_{-1}} \SE \xrightarrow{R_2} P+E$ where $S$
is the substrate, $P$ is the product, $E$ the enzyme, $\SE$ the
intermediate substrate-enzyme compound. The double-headed arrow
represents a reversible reaction, with the forward reaction named
$R_1$ and the reverse reaction named $R_{-1}$.  All reactions are
described by mass-action kinetics with rate constants $k_1, \;k_{-1},
\; k_2$. It represents the \textit{Michaelis-Menten mechanism for
enzymatic catalysis} \cite{SegeS:93,MichMM:13}.  The corresponding
ODE system is 

\smallskip

$\begin{array}{rclcrcl}
{dS}/{dt} & = & -k_1 E \cdot S + k_{-1}{\SE} & & 
{dE}/{dt} & = & -k_1 E \cdot S + k_{-1}{\SE} + k_2{\SE} \\
{dP}/{dt} & = & +k_2{\SE} & \text{and} &
{d{\SE}}/{dt} & = & +k_1E \cdot S -k_{-1}{\SE} - k_2{\SE}
\end{array}$

\smallskip

\noindent When the first two reactions are assumed fast and the third slow,
the intermediate species $\SE$ is considered to be at steady-state.
We can say that ${d\SE/dt  \approx 0}$  and from this
we obtain $\SE = {E_T}\cdot S/{(S + K_M)}$,
where $E_T = E + \SE$ is the total enzyme in the system and constant,
and ${K_M = {(k_{-1} + k_2)}/{k_1}}$ (\textit{Michaelis-Menten
constant}). Replacing $\SE$ with the expression above, 
we derive the ODE system
$-{dS}/{dt} = {dP}/{dt} = (k_2 E_T \cdot S)/(S + K_M)$.  
We now have a single reaction $S+E \xrightarrow{} P +E$ that
abstracts the original reactions.  This simplification
is called the \textit{Michaelis-Menten (MM) approximation} and it is
valid under some assumptions, such $S_0 + K_m \gg E_T$ where $S_0$ is the
initial quantity of $S$ \cite{SegeSS:89}.

This approach provides an analogy for the development of our semantic
equivalence which is presented in the next section.  It leads us
to partition reactions into fast and slow, and allows us to abstract
away from the fast reactions. We now introduce the process algebra
to which we will apply these concepts.


\section{Bio-PEPA with levels}\label{sec:bio-pepa}



The syntax of Bio-PEPA with levels \cite{CiocCH:09} is given by the 
grammar below. $S$ defines sequential components which describe the
behaviour of biochemical species, and $P$ defines model
components which combine the species components and from which we
can understand the interactions between species.
\[ S ::= (\alpha, \kappa) \mbox{ \texttt{op} } S \mid S + S \mid C
\qquad \text{\texttt{op}} ::= \reactant \mid \product \mid
\activator \mid \inhibitor \mid \modifier \qquad P::=P \sync{L}  P \mid
S(l) \]
In the term $(\alpha,\kappa) \mbox{ \texttt{op} } S$, 
$\alpha$ is an action or reaction name from $\calA$,
$\kappa \in \{1,2,\ldots\}$ is the
stoichiometric coefficient of the species and the prefix
combinator $\opns$ describes the role of the species in the
reaction. The symbol $\reactant$ is used for a reactant,
$\product$ a product, $\activator$ an activator,
$\inhibitor$ an inhibitor, and $\modifier$ for a generic
modifier.  The operator $+$ provides the choice between two sequential
components and species constants are defined by
$\smash{C \rmdef S}$. 
The process $P \sync{L} Q$ denotes the synchronisation
between two components $P$ and $Q$ and the set $L$ specifies those
reactions on which the components must synchronise. We use $P
\syncstar Q$ to denote the case when all actions shared by $P$ and
$Q$ are synchronised on.  In the model component $S(l)$, the parameter
$l \in \mathbb{N}$ represents the level of molecular count or
concentration.  The set of all Bio-PEPA species components is $\calS$
and the set of all Bio-PEPA model components is $\calC$.

We consider a constrained set of Bio-PEPA models to ensure well-behaved
systems. We require that a species is a choice between reactions
without any repeated actions and that there is only one species
component for a species at model level, as described by the next
definition.

\begin{definition}
\label{wdBs}
A Bio-PEPA sequential component $C$ is \emph{well-defined} if it has the form
\[ C \rmdef (\alpha_1,\kappa_1)\:\opnse_1\: C + \ldots + (\alpha_n,\kappa_n)
\: \opnse_n\: C  \text{\ \ \ written as\ \ \ } 
\textstyle C \rmdef \sum_{i=1}^n (\alpha_i,\kappa_i)\:\opnse_i\:C 
\text{\ \ \ where $\alpha_i \neq \alpha_j$ for $i \neq j$.} \]
A model component $P$ is \emph{well-defined} if it has the form
$P \rmdef C_1(l_1) \sync{\:\:L_1} \ldots \hspace*{-9pt}
\sync{\hspace*{11pt} L_{p-1}}\! C_p(l_p)$
where
each $C_i$ is a well-defined sequential component, the elements of
each $L_j$ appear in $P$ and if $i\neq j$ then $C_i\neq C_j$.
\end{definition}

\noindent Additionally, each model has an associated context collecting
together information such as rates, compartments and parameters, as now
defined.

\begin{definition} A \emph{well-defined Bio-PEPA system} 
$\mathcal{P}$ is a six-tuple
$\langle \mathcal{V},\mathcal{N},\mathcal{K}, \mathcal{F},Comp,P
\rangle$, where $\mathcal{V}$ is the set of compartments,
$\mathcal{N}$ is the set of quantities describing each species,
$\mathcal{K}$ is the set of parameters, $\mathcal{F}$ is
the set of functional rates, $Comp$ is the set of
well-defined sequential components and $P$ is a well-defined model component.
$\mathcal{V}$, $\mathcal{N}$, $\mathcal{K}$, $\mathcal{F}$, $Comp$ 
are called the \emph{context} of $P$.
\end{definition}

For details of the elements of the context and the definition of
well-defined Bio-PEPA system, see \cite{CiocCH:09,CiocCH:08a}. In
this paper, we only consider single compartment models
\footnote{For a presentation
of Bio-PEPA with locations see \cite{CiocCG:09a}.}.
The levels for a species are obtained from information contained
in $\calN$ for that species. More specifically, for each species
$C$, we assume a maximum\footnote{This is reasonable since cells
and other biological compartments have constrained volumes.} molecular
count $M_C$, and a fixed step size $H$ across all species to ensure
conservation of mass during reactions involving multiple species.
The maximum level for a species is determined by $N_C = \lceil M_C/H
\rceil$. Thus, $C$ has levels, $0,\ldots,N_C$, giving $N_C+1$ levels
in total.

\begin{figure*}[t]
\hspace*{-0.2cm}
{\renewcommand{\arraystretch}{0.40}
$\begin{array}{llllll}
\multicolumn{1}{l}{\texttt{\small prefixReac}} &
\D \frac{}{(\alpha,\kappa) \reactant 
S(l) \xrightarrow{(\alpha,[S:\reactant(l,\kappa)])}_c S(l-\kappa)}  
& \multicolumn{3}{l}{\kappa \leq l \leq N_S} \\
\\ \\
\multicolumn{1}{l}{\texttt{\small prefixProd}} &
\D \frac{}{(\alpha,\kappa) \product 
S(l) \xrightarrow{(\alpha,[S:\product(l,\kappa)])}_c S(l+\kappa)} 
& \multicolumn{3}{l}{0 \leq l \leq N_S-\kappa} \\
\\ 
\multicolumn{1}{l}{\texttt{\small prefixMod}} &
\D \frac{}{(\alpha,\kappa) \op
S(l) \xrightarrow{(\alpha,[S:\opns(l,\kappa)])}_c S(l)} 
&
\multicolumn{3}{l}{\!\!\!\!
{\renewcommand{\arraystretch}{0.20}
\begin{tabular}{l}
$\kappa \leq l \leq N_S \text{\ if\ } \opns=\activator$ \\
\\
$0 \leq l \leq N_S \text{\ if\ } \opns\in\{\inhibitor,\modifier\}$
\end{tabular}}}
\\ \\ \\
\multicolumn{1}{l}{\texttt{\small choice1}} &
\D \frac{S_1(l) \xrightarrow{(\alpha,w)}_c S_1'(l')}
{(S_1 + S_2)(l) \xrightarrow{(\alpha,w)}_c S_1'(l')} &
\multicolumn{1}{l}{\texttt{\small \quad choice2}} &
\D \frac{S_2(l) \xrightarrow{(\alpha,w)}_c S_2'(l')}
{(S_1 + S_2)(l) \xrightarrow{(\alpha,w)}_c S_2'(l')} \\
\\
\multicolumn{1}{l}{\texttt{\small coop1}} &
\D \frac{P_1 \xrightarrow{(\alpha,w)}_c P_1'}
{P_1 \sync{L} P_2 \xrightarrow{(\alpha,w)}_c P_1' \sync{L} P_2} 
\quad \alpha \!\not\in\! L &

\multicolumn{1}{l}{\texttt{\small \quad coop2}} &
\D \frac{P_2 \xrightarrow{(\alpha,w)}_c P_2'}
{P_1 \sync{L} P_2 \xrightarrow{(\alpha,w)}_c P_1 \sync{L} P_2'} 
\quad \alpha \!\not\in\! L \\
\\
\multicolumn{1}{l}{\texttt{\small coop3}} &
\multicolumn{1}{l}{\D \frac{P_1 \xrightarrow{(\alpha,w_1)}_c P_1' \quad P_2
\xrightarrow{(\alpha,w_2)} P_2'} 
{P_1 \sync{L} P_2 \xrightarrow{(\alpha,w_1\dc w_2)}_c P_1' \sync{L} P_2'} 
\:\: \alpha \!\in\! L } &
\multicolumn{1}{l}{\texttt{\small \quad constant}} &
\D \frac{S(l) \xrightarrow{(\alpha,[S:\op(l,\kappa)])}_c S'(l')}{C(l)
\xrightarrow{(\alpha,[C:\op(l,\kappa)])}_c S'(l')} \quad C \!\rmdef\! S \\
\\
\\
\multicolumn{1}{l}{\texttt{\small Final}} &
\multicolumn{3}{l}{
\D \frac{P \xrightarrow{(\alpha,w)}_c P'}
   {\langle \calV, \calN, \calK, \calF, Comp, P \rangle
    \xrightarrow{(\alpha,r_\alpha[w,\mathcal{N,K}])}_s
    \langle \calV, \calN, \calK, \calF, Comp, P' \rangle}} \\
\\
\\
\\
\end{array}$}
\caption{Operational semantics of Bio-PEPA}
\label{sos}
\end{figure*}

The operational semantics for Bio-PEPA systems with levels is given
in Figure~\ref{sos} where $N_S$ is the maximum level
for the species $S$. These operational semantics define two labelled
transition systems. The enzyme, inhibitor and general modifier prefixes
are used in reactions that are not modelled as bimolecular reactions with
mass action kinetics, hence the semantics for these prefixes reflect the
fact that the species is not consumed in the reaction and the level
remains the same.

The rules with lowercase letters derive a
relation where transitions are labelled with a reaction name and a
string collecting information about each species involved
in the reaction consisting of the species name, its role in the
reaction, its stoichiometric coefficient for the reaction, and its
current level.  This information is then used in the rule \texttt{Final}
which includes the context of the model to determine the rate of
the reaction and generates the \emph{stochastic relation}\footnote{For
more details on the stochastic relation defined by 
\texttt{Final}, see \cite{CiocCH:09,CiocCH:08a}.  Briefly,
$r_\alpha[w,\calN,\calK] = f_\alpha[w,\calN,\calK]/H \in (0,\infinity)$
where $f_\alpha$ is the functional rate for the reaction $\alpha$
from $\calF$ and $H$ is the step size. $\calN$ provides species
information and $\calK$ provides constants.} from which a CTMC can
be obtained.  In this paper, we focus on the capability relation
as we use this in our definition of equivalence.

\begin{definition}
\emph{A capability label} is defined as $\theta =
(\alpha,w)$ with $\alpha \in \calA$ and the
list $w$ defined by the grammar
$w ::= [\:S\cope\:(l,\kappa)\:] \mid w\:\dc\:w$
where $S \in \calS$, $l \in \mathbb{N}, l \geq 0$ the level, 
$\kappa \in \mathbb{N}, \kappa \geq 1$ the stoichiometric coefficient,
and $::$ is list concatenation.
The set of all such capability labels is $\Theta$.
\end{definition}

\begin{definition}
Given a Bio-PEPA model, the \emph{capability relation} 
${\xrightarrow{}_c} \!\subseteq\! {\calC \!\times\! \Theta \!\times\! \calC}$
is the smallest relation defined by the first nine
rules in Figure~\ref{sos}.
An element of the transition system is written
$P \!\xyrightarrow{\!(\alpha,w)\!}_c\! P'$.
\end{definition}

The string $w$ is defined as a list \cite{CiocCH:09} but the order
of elements is not important so it can be viewed as a multiset.
For well-defined systems, $w$ is a set \cite{GalpG:10} and we will
treat it as such in the sequel.

\begin{definition}
The \emph{derivative set} $ds(P)$ is
the smallest set such that
  $P \in ds(P)$ and
  if $P' \in ds(P)$ and $P'
\xyrightarrow{(\alpha,w)}_{c} P''$ then  $P'' \in ds(P)$. $P' \in ds(P)$
is a \emph{derivative} of $P$.
\end{definition}


This section has defined Bio-PEPA syntax and semantics for Bio-PEPA
systems with levels.  We assume well-defined Bio-PEPA systems with
levels for the remainder of the paper.  Moreover, we assume that
in well-defined Bio-PEPA systems all shared actions are
synchronised over and hence we use the aforementioned notation
$\syncstar$ for cooperation.  The next section considers the
equivalence we develop.

\section{Fast-slow bisimilarity}\label{sec:equiv}

Our basis for developing the equivalence is the QSSA where intermediate
species at steady state can be approximated.  As defined in
Section~\ref{sec:qssa}, we have a set of intermediate species
$\Upsilon$ and non-intermediate species $\Phi$ with $\Upsilon
\cap \Phi = \emptyset$.  When comparing models, we need to ensure
that we exclude intermediate species in the comparison.  Therefore
we define a function that transforms the set $w$ by removing all
intermediate species in $\Upsilon$, and leaving certain species in
$\Delta \subseteq \Phi$. Typically, these species will be those that appear in
transitions for slow reactions in both models with the same role
in both reactions. 
Let $\wmod = \{C\cop\:(l,\kappa)\in w\mid C \in \NonInt \}$ and note that
$\wmoda{w_1\dc w_2} = \wmoda{w_1}\dc\wmoda{w_2}$.

Since QSSA is based on relative reaction rates, we assume that each
reaction can be described as fast or slow, leading to a partition
of the set $\calA$ into the set of slow reactions $\As$ and the set
of fast reactions $\Af$.  For convenience, we introduce new
transitions.

\newpage

\begin{definition} For well-defined Bio-PEPA models $P,P'$.
\smallskip

\hspace*{-0.3cm}
\begin{tabular}{ll}
$\bullet$ \ If $P \xyrightarrow{(\alpha,w)}_c P'$ and 
$\alpha \in \Af$ then $P \twoheadrightarrow P'$. \ \ \ \ 
&
$\bullet$ \ If $P \xyrightarrow{(\alpha,w)}_c P'$ and 
$\alpha \in \As$ then $P \xyrightarrow{\alpha,\wmod} P'$. \\
$\bullet$ \ If $P (\twoheadrightarrow)^* P'$ then $P
\xyRightarrow{\ } P'$. &
$\bullet$ \ If $P (\twoheadrightarrow)^* \xyrightarrow{\alpha,\wmod}
(\twoheadrightarrow)^* P'$ then $P \xyRightarrow{\alpha,\wmod} P'$. 
\end{tabular}
\end{definition}

These new transitions consider the reaction names, the reaction
speeds and certain non-intermediate species involved in the reaction,
hence the equivalence defined will be semi-quantitative in nature.
We now define a new bisimulation based on fast and slow actions.


\begin{definition}
A symmetric relation $\calR$ over $\calC \times \calC$ is a \emph{fast-slow
bisimulation for $\Af$} if $(P,Q) \in \calR$ implies that 
\begin{itemizeless}
\item for all $\alpha \in \As$ whenever $\smash{P
\xyrightarrow{\alpha,\wmod} P'}$ there exists $Q'\!$ with
$\smash{Q \xyRightarrow{\alpha,\wmod} Q'}$ and $(P',Q')\in\!\calR$, and
\item whenever $\smash{P \twoheadrightarrow P'}$ there exists $Q'$ with
$\smash{Q \xyRightarrow{\ } Q'}$ and $(P',Q') \in \calR$
\end{itemizeless}
\end{definition}

Here $\twoheadrightarrow$ plays a similar role to $\xyrightarrow{\tau}$
in Milner's definition of weak bisimulation \cite{Miln89}, and hence
these are similar notions.  Some results for weak bisimulation
may hold for fast-slow bisimulation but there are limits on this,
particularly for proofs based on transition derivations. For example,
when showing congruence where one works with transition
derivations, the difference between $\twoheadrightarrow$ and
$\xyrightarrow{\tau}$ is apparent -- this will be discussed in
more detail later when congruence with respect to the cooperation is
proved.
We can now define a notion of fast-slow bisimilarity with respect to a given
set of fast actions.

\begin{definition}
$P$ and $Q$ are \emph{fast-slow bisimilar for $\Af$} 
($P \thickapprox_{\!\Af} Q$) if there
exists a fast-slow bisimulation for $\Af$,
$\calR$ such that $(P,Q) \in \calR$.
\end{definition}

Then $\thickapprox_{\!\Af}$ is the largest fast-slow bisimulation
for $\Af$.  Now that we have a definition, we wish to show that it
is useful by proving congruence for operators of interest, and
by considering an example.

\section{Congruence}\label{sec:congruence}

When a semantic equivalence captures the notion of same behaviour we
can reason about pairs of systems acting the same. However, if
we show that a semantic equivalence is a congruence with respect to an
operator of the process algebra, then we know that we can build new
systems with equivalent behaviour using that operator. We start by
considering the cooperation operator as it would be useful to know that
we can combine fast-slow bisimilar systems.

To ensure congruence, it is not possible for fast actions to appear
on both sides of the cooperation operator. This makes sense since
this equivalence abstracts away from the details of the fast
reactions, and it is not possible to know if the two models have
abstracted the same reactions. Moreover, recent assessment of
Michaelis-Menten approximation has shown that the QSSA does not
hold if there are other fast reactions (apart from those to which
the QSSA is applied) involving the species that are part of the
Michaelis-Menten reactions \cite{SanfSGP:10}.

\begin{theorem}
If $P_1 \thickapprox_{\!\Af} P_2$ then
$P_1 \syncstar Q \thickapprox_{\!\Af} P_2 \syncstar Q$ and
$Q \syncstar P_1 \thickapprox_{\!\Af} Q \syncstar P_2$ provided
that no action in $\Af$ appears in both $Q$ and $P_1$ or in
both $Q$ and $P_2$.
\end{theorem}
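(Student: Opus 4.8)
The plan is to exhibit a relation $\calR$ containing the required pairs and to verify directly that it is a fast-slow bisimulation for $\Af$; since $\thickapprox_{\!\Af}$ is the largest such bisimulation, this establishes both congruence claims at once. Concretely, I would take the symmetric closure of
\[ \calR = \{(P \syncstar Q,\, P' \syncstar Q) : P \thickapprox_{\!\Af} P'\} \cup \{(Q \syncstar P,\, Q \syncstar P') : P \thickapprox_{\!\Af} P'\}, \]
restricted to pairs in which no action of $\Af$ is shared between $Q$ and either $P$ or $P'$. Because the action set of any derivative is contained in that of the component it comes from, this no-shared-fast-action condition is inherited by all derivatives reached below, so $\calR$ is closed under the transitions generated; and $\calR$ is symmetric because $\thickapprox_{\!\Af}$ is. It therefore suffices to show that, for $(P \syncstar Q, P' \syncstar Q) \in \calR$, every slow transition and every fast transition of $P \syncstar Q$ is matched from $P' \syncstar Q$ into $\calR$, the reversed-cooperation pairs being handled symmetrically.

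First I would classify the transitions of $P \syncstar Q$ according to the cooperation rule (\texttt{coop1}, \texttt{coop2}, \texttt{coop3}) that produces them. For a \emph{fast} transition $P \syncstar Q \twoheadrightarrow R$ the decisive observation is that \texttt{coop3} is impossible: it would require the fast action to be shared between $P$ and $Q$, which the hypothesis forbids. Hence a fast move is always performed by one side alone. If it comes from $P$ (\texttt{coop1}), I use $P \thickapprox_{\!\Af} P'$ to obtain $P' \xyRightarrow{\ } \bar{P}'$ with the residuals still bisimilar; if it comes from $Q$ (\texttt{coop2}), I leave $P'$ fixed and replay the same $Q$-move, which \texttt{coop2} permits because the fast action, occurring in $Q$, cannot occur in $P'$. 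For a \emph{slow} transition I treat the three rules in turn: a $P$-solo action (\texttt{coop1}) and a $Q$-solo action (\texttt{coop2}) are matched as above, while a synchronised action (\texttt{coop3}) is matched by combining the weak slow move $P' \xyRightarrow{\alpha,\wmod} \bar{P}'$ supplied by $\thickapprox_{\!\Af}$ with the \emph{same} $Q$-move; here the identity $\wmoda{w_1\dc w_2}=\wmoda{w_1}\dc\wmoda{w_2}$ shows that the reassembled synchronised label is exactly the required $(\alpha,\wmod)$.

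The heart of the argument, and the main obstacle, is the lifting of a weak move of the $P$-side, say $P' (\twoheadrightarrow)^* \xyrightarrow{\alpha,\wmod} (\twoheadrightarrow)^* \bar{P}'$, to a weak move of the whole cooperation $P' \syncstar Q$. Each $\twoheadrightarrow$ step in the prefix and suffix is a fast action of $P'$, and to promote it through \texttt{coop1} with $Q$ left untouched one must know that it does not fall in the synchronisation set, i.e.\ that it is not shared with $Q$. This is guaranteed precisely by the hypothesis that no $\Af$-action occurs in both $Q$ and $P'$. This is exactly the step where fast-slow bisimilarity parts company with weak bisimilarity: a $\tau$ action is internal and never synchronises, so the analogous lifting is free, whereas a $\twoheadrightarrow$ step is an ordinary, potentially shared, action that behaves internally across the cooperation only under the stated condition. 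Once this lifting is in place, each of the cases above deposits the matched residual back into $\calR$, completing the verification.
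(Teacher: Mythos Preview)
Your proposal is correct and follows essentially the same approach as the paper: define the candidate relation built from $\thickapprox_{\!\Af}$-related pairs under cooperation with $Q$, then verify the fast-slow bisimulation clauses by case analysis on the cooperation rules, using the no-shared-fast-action hypothesis both to rule out \texttt{coop3} for fast transitions and to lift the fast prefix/suffix of a weak match through \texttt{coop1}. If anything, you are more explicit than the paper about why the hypothesis is inherited by derivatives and about the role of the identity $\wmoda{w_1\dc w_2}=\wmoda{w_1}\dc\wmoda{w_2}$ in the synchronised slow case.
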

\begin{spacing}{1.10}
\begin{proof}
Let $\calR = \{ (P'_1 \syncstar Q',P'_2 \syncstar Q') \mid
P'_1 \thickapprox_{\!\Af} P'_2 \}$. Consider a transition
$P'_1 \syncstar Q' \xyrightarrow{\alpha,\wmod} R$ which is obtained from 
$\smash{P'_1 \syncstar Q' \xyrightarrow{(\alpha,w)}_c R}$ since $\alpha
\in \As$.
There are three cases and we prove the most complex here.
Assume $P'_1 \xyrightarrow{(\alpha,w_1)}_c P''_1$, $Q'
\xyrightarrow{(\alpha,w_2)}_c Q''$ and $w = w_1\dc w_2$ then $R$ is
$P_1'' \syncstar Q''$. 
Since $P'_1 \thickapprox_{\!\Af} P'_2$, $\smash{P'_2
\xyRightarrow{\alpha,\wmoda{w_1}} P''_2}$ and 
hence $P'_2
\xyrightarrow{(\beta_1,v_1)}_c \cdots
\xyrightarrow{(\beta_n,v_n)}_c
P'_3 \xyrightarrow{(\alpha,w_1)}_c P''_3 \xyrightarrow{(\gamma_1,u_1)}_c
\cdots \xyrightarrow{(\gamma_{m},u_{m})}_c P''_2$ for the actions
$\beta_1,\ldots,\beta_n,\gamma_1,\ldots,\gamma_{m} \in \Af$.
From this, we can derive 
$P'_2 \syncstar Q' \xyrightarrow{(\beta_1,v_1)}_c \cdots
\xyrightarrow{(\beta_n,v_n)}_c P'_3 \syncstar Q'
\xyrightarrow{(\alpha,w_1\dc w_2)}_c P''_3 \syncstar Q''
\xyrightarrow{(\gamma_1,u_1)}_c\: \:\cdots\:
\xyrightarrow{(\gamma_{m},u_{m})}_c\: P''_2 \syncstar Q''$, hence
we have $\smash{P'_2 \syncstar Q' \xyRightarrow{\alpha,
\wmoda{w_1\dc w_2}} P''_2 \syncstar Q''}$ as required with $(P''_1 \syncstar
Q'',P''_2 \syncstar Q'') \in \calR$.

Next, consider a transition $\smsm{P'_1 \syncstar Q' \twoheadrightarrow{\
} R}$. Hence there exists $\beta \in \Af$ such that $\smsm{P'_1
\syncstar Q' \xyrightarrow{(\beta,v)}_c R}$. There are two cases
since cooperation over an action in $\Af$ is excluded by the
condition.  We show the case where $Q' \xyrightarrow{(\beta,v)}_c
Q''$ and $R$ is $P_1' \syncstar Q''$. Then the transition $P'_2
\syncstar Q' \xyrightarrow{(\beta,v)}_c P'_2 \syncstar Q''$ can be
derived and hence $P'_2 \syncstar Q' \xyRightarrow{\ } P'_2 \syncstar
Q''$ with $(P'_1 \syncstar Q'',P'_2 \syncstar Q'') \in \calR$ as
required.
\end{proof}
\end{spacing}

To see why the sharing of fast actions must be prohibited, consider
$\smash{S_1 \rmdef (\alpha,2)\uparrow S_1 + (\gamma,2)\downarrow
S_1}$ and $\smash{S_2 \rmdef (\beta,2)\uparrow S_2 + (\gamma,2)\downarrow
S_2}$. Clearly $S_1(0) \approx_{\{\alpha,\beta\}} S_2(0)$. Considering
a third species $\smash{S \rmdef (\alpha,1) \downarrow S}$, it is
not the case that ${S_1(0) \syncstar S(1)} \approx_{\{\alpha,\beta\}}
{S_2(0) \syncstar S(1)}$ since these systems have very different
behaviours because there are no $\gamma$ reactions in the first systems
and there are repeating $\gamma$ reactions in the second. To prevent
this difference, $S$ could be modified to require that it perform
all fast reactions giving $\smash{S \rmdef (\alpha,1) \downarrow S
+ (\beta,1) \downarrow S}$ but it is not clear how to generalise
this beyond species. This counter-example for the condition in the
theorem demonstrates that the definition of fast-slow bisimulation
does differ from that of weak bisimulation. Here we abstract away
from fast reaction names on a transition, whereas with weak
bisimulation, transitions with the named action $\tau$ are treated
abstractly.

\subsection{The species extension operator}

Since the focus is well-defined Bio-PEPA models, it is not useful to
consider the operators for sequential agents individually. However, we
do sometimes want to extend species' ability to be involved in
reactions, and we now define an operator that permits this to happen
and show that we have congruence for this operator under certain
conditions.

When we build the models of biological systems, we may have no
knowledge of what other reactions the species could be involved in.
For example, considering the product of some sequence of reactions,
we can imagine various scenarios in which we would want
the product to be able to react with new species added by cooperation.
It is difficult to do this at model level but we can consider it
at species level by defining an appropriate extension operator
\cite{GalpG:10}.

\begin{definition}
Given two well-defined species $A$ and $B$ such that the reaction names of
$A$ are disjoint from those of $B$, define $A\{B\}$, the 
\emph{extension of} $A$ \emph{by} $B$ as

$\begin{array}{c}
\hspace*{-0.7cm}
\D
A\{B\} \!=\! \sum_{i=1}^n (\alpha_i,\kappa_i) \opnse_{i\:} A\{B\} +
           \sum_{j=1}^m (\beta_j,\lambda_j) \opnse_{\!j\,} A\{B\} \quad 
           \text{where} \quad
A \!=\! \sum_{i=1}^n (\alpha_i,\kappa_i) \opnse_{i\;} A, \quad 
   B \!=\! \sum_{j=1}^m (\beta_j,\lambda_j) \opnse_{\!j\,} B. 
\end{array}$
\end{definition}

This permits $A$ to take on additional reaction capabilities,
specifically those of $B$. $A\{B\}$ is well-defined since there are
no repeated reaction names and because $A$ and $B$ are well-defined.
Note that $A\{B\}$ and $B\{A\}$ are
isomorphic since their transition systems are structurally identical
with matching actions.
The next theorem shows how species can be augmented with ways to
participate in new reactions in a way which preserves fast-slow
bisimulation. 

\begin{theorem}
Let $C_1(l) \thickapprox_{\!\Af}\! C_2(l)$ for
sequential Bio-PEPA components $C_1$ and $C_2$ for all $0 \leq l \leq
N_{C_1}=N_{C_2}$ and let $C$ be a
well-defined species with reaction names disjoint from those in $C_1$
and $C_2$.
Then $C_1\{C\}(l) \thickapprox_{\!\Af}\! C_2\{C\}(l)$
and $C\{C_1\}(l) \thickapprox_{\!\Af}\! C\{C_2\}(l)$.
\end{theorem}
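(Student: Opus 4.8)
The plan is to build the required fast–slow bisimulation directly from the one witnessing $C_1(l) \thickapprox_{\Af} C_2(l)$, exploiting the structural regularity of the extension operator. Since $A\{B\}$ simply adds $B$'s choice summands to $A$, every transition of $C_i\{C\}(l)$ is either an inherited $C_i$-transition or a newly added $C$-transition, and these are disjoint because the reaction names of $C$ are disjoint from those of $C_1$ and $C_2$. The candidate relation I would take is
\[
\calR = \{\,(C_1\{C\}(l), C_2\{C\}(l)) \mid 0 \leq l \leq N_{C_1}\,\} \;\cup\; {\thickapprox_{\Af}},
\]
or more carefully the symmetric closure of the set of pairs reachable from the initial pair, since I must track that the second coordinate $l$ of both sides stays synchronised as slow and fast transitions fire. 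The key observation making this work is that the extension only changes the \emph{species} component, not the level arithmetic: a prefix $(\alpha_i,\kappa_i)\opnse_i$ inherited from $C_i$ acts on $l$ in $C_i\{C\}(l)$ exactly as it does in $C_i(l)$, and likewise a prefix from $C$ acts identically on both sides. So I would first prove a transition-correspondence lemma: $C_i\{C\}(l) \xyrightarrow{(\alpha,w)}_c R$ iff either $\alpha$ is a reaction name of $C_i$ and $C_i(l) \xyrightarrow{(\alpha,w)}_c R$ (with $R$ of the form $S'\{C\}(l')$, reusing $C_i(l)$'s target level), or $\alpha$ is a reaction name of $C$ and $C(l) \xyrightarrow{(\alpha,w)}_c C'(l')$ with $R = C'\{C\}(l')$, and the latter transition is identical on both the $C_1\{C\}$ and $C_2\{C\}$ sides.

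With that lemma I would verify the two bisimulation clauses for a pair $(C_1\{C\}(l), C_2\{C\}(l)) \in \calR$. For the slow clause, suppose $C_1\{C\}(l) \xyrightarrow{\alpha,\wmod} R$ with $\alpha \in \As$. If $\alpha$ names a $C_1$-reaction, the inherited transition gives $C_1(l) \xyrightarrow{\alpha,\wmod} $ the corresponding target; since $C_1(l) \thickapprox_{\Af} C_2(l)$ there is a matching weak slow transition $C_2(l) \xyRightarrow{\alpha,\wmod}$, and lifting the finite chain of fast steps and the single slow step back through the extension (again using the correspondence lemma in the reverse direction) produces $C_2\{C\}(l) \xyRightarrow{\alpha,\wmod} R'$ with the targets $\thickapprox_{\Af}$-related, hence in $\calR$. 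If instead $\alpha$ names a $C$-reaction, the very same transition is available from $C_2\{C\}(l)$ by the identical-transitions part of the lemma, and the two targets are $C'\{C\}(l')$ on each side, which lie in $\calR$ at level $l'$. The fast clause is handled the same way, splitting on whether the fast $\beta$ belongs to $C_i$ (use the weak fast matching $C_2(l) \xyRightarrow{\ }$ from $\thickapprox_{\Af}$ and lift it) or to $C$ (match it identically, yielding $C_2\{C\}(l) \xyRightarrow{\ }$ trivially).

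The main obstacle I anticipate is the \textbf{lifting step}: given a matching \emph{weak} transition of $C_2(l)$ — a chain $(\twoheadrightarrow)^* \xyrightarrow{\alpha,\wmod} (\twoheadrightarrow)^*$ through intermediate derivatives of $C_2$ — I must reconstruct the corresponding weak transition of $C_2\{C\}(l)$. Each intermediate derivative of $C_2(l)$ is some $S(l')$, and I need every one of its constituent fast steps to remain available after extension by $C$. This is where disjointness of reaction names is essential: because $C$ contributes no reaction name shared with $C_2$, extending by $C$ can neither block nor redirect any $C_2$-transition, so the whole chain lifts summand-by-summand and the $\syncstar$-free, purely sequential setting means there is no synchronisation to disturb. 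I would isolate this as the transition-correspondence lemma above and prove it by a straightforward induction on the derivation of the semantic rules \texttt{choice1}, \texttt{choice2}, and \texttt{constant} in Figure~\ref{sos}, which is routine but must be stated carefully so that the level bookkeeping in the side-conditions ($\kappa \leq l \leq N_S$ etc.) is seen to agree on both sides. The symmetric statements $C\{C_1\}(l) \thickapprox_{\Af} C\{C_2\}(l)$ follow either by the noted isomorphism $A\{B\} \cong B\{A\}$ or by re-running the identical argument with the roles of the summand families swapped.
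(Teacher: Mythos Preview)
Your approach is essentially the paper's: split each transition of $C_1\{C\}(l)$ by whether the action name comes from $C$ or from $C_1$, match $C$-transitions identically on the $C_2\{C\}$ side, and for $C_1$-transitions invoke $C_1(l)\thickapprox_{\Af}C_2(l)$ to obtain a weak match and then lift the whole chain of fast and slow steps back through the extension (your ``lifting step'' is exactly the paper's ``$C_2\{C\}(l)$ can perform the same actions'').  One small correction: your candidate relation will not close as written, because after a $C_1$-step the weak match lands at $C_2\{C\}(l'')$ with $C_1(l')\thickapprox_{\Af}C_2(l'')$ but possibly $l'\neq l''$, and the pair $(C_1\{C\}(l'),C_2\{C\}(l''))$ is neither in your diagonal set nor literally in $\thickapprox_{\Af}$; the relation you want is $\{(C_1\{C\}(l),C_2\{C\}(l'))\mid C_1(l)\thickapprox_{\Af}C_2(l')\}$, which the paper also uses implicitly without ever naming it.
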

\begin{spacing}{1.00}
\begin{proof}
Consider a transition $\smsm{C_1\{C\}(l) \xyrightarrow{\alpha,\wmod} C_1\{C\}(l')}$ for
$\alpha \in \As$ then $\smsm{C_1\{C\}(l) \xyrightarrow{(\alpha,w)}_c
\!C_1\{C\}(l')}$. If $\alpha$ appears in $C$, then $\smsm{C_2\{C\}(l)
\!\xyrightarrow{(\alpha,w)}_c\!C_2\{C\}(l')}$ and $\smsm{C_2\{C\}(l)
\xyRightarrow{\alpha,\wmod}\!C_2\{C\}(l')}$.

If $\alpha$ appears in $C_1$ and $C_2$ then $\smsm{C_1(l)
\xyrightarrow{\alpha,\wmod} C_1(l')}$ and since $C_1(l)
\thickapprox_{\!\Af} C_2(l)$, then $\smsm{C_2(l)
\xyRightarrow{\alpha,\wmod} C_2(l'')}$. This then gives
$\smsm{C_2(l)\xyrightarrow{(\beta_1,v_1)}_c \:\cdots\:
\xyrightarrow{(\beta_n,v_n)}_c C_2(m) \, \xyrightarrow{(\alpha,w)}_c
C_2(m') \xyrightarrow{(\gamma_1,u_1)}_c}$ $\smsm{\cdots
\xyrightarrow{(\gamma_{m},u_{m})}_c C_2(l'')}$ for the actions
$\beta_1,\ldots,\beta_n,\gamma_1,\ldots,\gamma_{m} \in \Af$.
$C_2\{C\}(l)$ can perform the same actions hence 
$\smsm{C_2\{C\}(l)
\xyRightarrow{\alpha} C_2\{C\}(l'')}$.

Next, consider $\smsm{C_1\{C\}(l) \twoheadrightarrow
C_1\{C\}(l')}$.  There exists $\beta \in \Af$ such that 
$\smsm{C_1\{C\}(l) \xyrightarrow{(\beta,v)}_c C_1\{C\}(l')}$. If $\beta$ appears
in $C$ then as above, there is a matching
transition in $C_2\{C\}(l)$. If $\beta$ appears in $C_1$ since
$C_1(l) \thickapprox_{\!\Af} C_2(l)$, then $\smsm{C_2(l)
\xyRightarrow{\ } C_2(l'')}$ which gives $\smsm{C_2(l)
\xyrightarrow{(\beta_1,v_1)}_c \cdots
\xyrightarrow{(\beta_n,v_n)}_c C_2(l'')}$ for
$\beta_1,\ldots,\beta_n \!\in\! \Af$. $C_2\{C\}(l)$ can perform these
actions,
hence $\smsm{C_2\{C\}(l)
\xyRightarrow{\ } C_2\{C\}(l'')}$ as required. 
\end{proof}
\end{spacing}

Let $C_1 \rmdef (\alpha,1) \uparrow C_1$ and  
$C_2 \rmdef (\alpha,1) \uparrow C_2 + (\beta,1) \oplus C_2$ with shared
maximum level, then $C_1(l) \thickapprox_{\{\beta\}} C_2(l)$. For any
sequential component $C$ with different reaction names 
to $\alpha$ and
$\beta$ and the same maximum level, the congruence result applies.


\section{Slow bisimilarity}

Checking for fast-slow bisimilarity requires that all reactions
must be considered.  We now define an equivalence over just the
slow reactions. If we can identify conditions under which this
equivalence implies fast-slow bisimilarity, then whenever we have
models that satisfy those conditions, we need only check the slow
reactions to prove that a relation is a fast-slow bisimulation.
This section introduces such an equivalence and conditions. 
In the next section when we consider competitive inhibition, we
will illustrate how compact our proofs are.
First we define the new equivalence. As before, it is assumed that $\As$
and $\Af$ partition $\calA$.

\begin{definition}
A symmetric relation $\calR$ over $\calC \times \calC$ is a \emph{slow
bisimulation for $\As$} if $(P,Q) \in \calR$ implies that
for all $\alpha \in \As$ whenever
\begin{itemize}
\item $\smash{P
\xyrightarrow{\alpha,\wmod} P'}$ there exists $Q'\!$ with
$\smash{Q \xyRightarrow{\alpha,\wmod} Q'}$ and $(P',Q')\in\!\calR$
\end{itemize}
$P$ and $Q$ are \emph{slow bisimilar for $\As$} if there
exists a slow bisimulation for $\As$, $\calR$ such that
$(P,Q) \in \calR$.
\end{definition}

We now consider when this can be applied using a technique that allow
variables to be classified by what reactions affect them.

\subsection{Variable classification}

We present an existing technique that allows the identification of
slow variables (those that are only affected by slow reactions) and
fast variables (those that are affected by fast and slow reactions)
\cite{GomeGVT:08}.  Note that variables are not the same as species
since a variable can either be an individual species or a linear
combinations of species.

A set of reactions can be expressed as a stoichiometry matrix
$\mathbf{S}$ which has $m$ columns, one for each reaction and $n$
rows, one for each species. $S_{i,j}$ describes the stoichiometry
of species $X_i$ with respect to reaction $R_j$. 


A stoichiometry matrix can be transformed into a matrix of the same
size with the form described in Figure~\ref{fig:matrix} \cite{GomeGVT:08}.
As mentioned above, the variables that are associated with the rows
of $\mathbf{Q}$ are linear combinations of the original species
variables and hence when given values for the new variables, it is
possible to establish values for the species.  

The top row of submatrices in the figure are zeroes since these
represent conserved variables and reactions do not affect them.
$\mathbf{Q}_{ss}$ has size $n_s \times m_s$ and captures the
stoichiometry of slow reactions for slow variables. The other
submatrix in its row is zero since slow variables are not affected
by fast reactions. The last row of $\mathbf{Q}$ consists of an $n_f
\times m_s$ matrix and an $n_f \times m_f$ matrix describing the
stoichiometry of slow and fast reactions with respect to fast
variables.

For a given ordering of reactions and variables, $\mathbf{Q}$ is
unique. However, for different orderings, an equally valid but
different $\mathbf{Q}$ may be obtained. If there are no slow variables
then this technique cannot be used.

For reasons of space, it is not possible to fully describe the
matrix transformation defined in \cite{GomeGVT:08}. The idea is
based around invariants. These are variables whose values are not
changed by the dynamics of the model.  First, invariants (conserved
variables) of the
model are identified.  Then, by removing the slow reactions from
the reaction equations, it is possible to find slow variables
(invariants when slow reactions are removed), if any. Then sufficient
fast variables must be identified so that there are $n$ variables
in total. Each new variable must be linearly independent of the
other new variables, and the new variables are linear combinations
of the original species variables.  This process is illustrated in
the example section.

\begin{figure}
\begin{tabular}{lp{13cm}}
\\
\\

$\mathbf{Q} =
{\renewcommand{\arraystretch}{1.6}
\begin{bmatrix}
\;\mathbf{0}      &\mathbf{0}\: \\
\;\mathbf{Q}_{ss} & \mathbf{0}\: \\
\;\mathbf{Q}_{fs} & \mathbf{Q}_{ff}\: \\
\end{bmatrix}}$
\hspace*{-0.7cm}

& \vspace*{-1.05cm}

\small{\begin{itemizeless}
\item
Columns represent reactions $R_1,\ldots
R_{m_s},R'_{1}\ldots R'_{m_f}$ where the $R_i$ are $m_s$ slow
reactions, the $R'_j$ are $m_f$ fast reactions, and $m_s\pls  m_f =
m$
\item Rows represent variables
$X_1,\ldots,X_{n_c},X'_1, \ldots X'_{n_s},X''_1, \ldots X''_{n_f}$ where
the $X_i$ are conserved variables, the $X'_j$ are slow
variables, the $X''_k$  are fast variables and $n_c\pls n_s\pls n_f = n$
\end{itemizeless}}
\\
\end{tabular}
\caption{Matrix transformation \cite{GomeGVT:08}.}
\label{fig:matrix}
\end{figure}

\vspace*{2mm}

\subsection{Application to Bio-PEPA}

Given a Bio-PEPA model, we can construct its stoichiometry matrix
from the species component definitions. Using the technique described
above we can identify invariants, slow and fast variables\footnote{These
invariants are related to P-invariants in Petri nets \cite{ClarCGGGH:12a}.
Invariants can be determined automatically by the Bio-PEPA Plug-in
\cite{ClarCGGK:10a}. Since the Bio-PEPA Plug-in allows reactions to be
removed when inferring invariants, slow variables can also be found
automatically by removing slow reactions. See also
\url{www.biopepa.org}.}.

Note that a well-defined Bio-PEPA model only differs from its
derivatives in terms of the levels of each species, hence models
can be represented as vectors where each element represents the
level of a species.  See Figure~\ref{fig:tss} for an example.

A model's transition system, the capability relation, is then 
defined over states that are given in vector form $(v_1,\ldots,v_p)$
for $p$ species.  These states can be transformed to
$(s_1,\ldots,s_{n_s},f_1,\ldots,f_{n_f})$ where $n_s\pls n_f = p\mns
n_c$, producing a new transition system where the transitions are
unchanged and the states are defined with respect to the values of
the new variables, specifically the slow and fast variables.
Conserved variable values are not included in the new states since
their values are fixed.  The new states contain the same information
as the original states, and they therefore stay unique. It
is not possible for two states in the original transition system
to collapse into one state in the new transition system.  We can
conclude that the transition systems are isomorphic, meaning that
there is a bijection between states, and transitions are preserved
with the same labels.

We now identify conditions that allow us to show when a slow
bisimulation is a fast-slow bisimulation using variable classification.
We restrict ourselves to the case of equivalence between a model
which has conserved, slow and fast variables and a model that has
conserved variables, slow variables which are the same as those in
the first model, and no fast variables.  We also require that slow
variables are individual species. Extending the result to more
general cases is further work.


\newpage

\begin{proposition}\label{prop:slowfast}
Consider two Bio-PEPA models $P_i$ for $i=1,2$
where
$\Delta_i$ contains exactly the slow species of the model, such that
$\Delta_1$ and $\Delta_2$ have the same species.
Let $\calR$ be a relation over Bio-PEPA models such that for all
$\bigl((s_1,\ldots,s_{n_s},f_1,\ldots,f_{n_f}),
(s'_1,\ldots,s'_{n_s})\bigr)\in \calR$, the $s_i$ and $s'_i$
are values for all slow variables and the $f_i$ are the values for fast
variables
If $s_i=s'_i$ for $i \in \{1,\ldots n_s\}$ and $\calR$ is a slow
bisimulation for $\As$, then $\calR$ is a fast-slow bisimulation
for $\Af$.
\end{proposition}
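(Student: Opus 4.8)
The first thing I would observe is that the slow-transition clause in the definition of a fast-slow bisimulation is word-for-word the single clause defining a slow bisimulation. Since $\calR$ is assumed to be a slow bisimulation for $\As$, that clause already holds for $\calR$ and, by symmetry, for its converse. Hence the whole task reduces to verifying the second, fast-transition clause: whenever $(P,Q)\in\calR$ and $P\twoheadrightarrow P'$, I must exhibit $Q'$ with $Q\xyRightarrow{\ }Q'$ and $(P',Q')\in\calR$.

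The technical heart of the argument is to show that a fast transition never alters the value of any slow variable. This is precisely what the shape of the transformed stoichiometry matrix $\mathbf{Q}$ in Figure~\ref{fig:matrix} encodes: the row-block belonging to the slow variables is $[\,\mathbf{Q}_{ss}\;\;\mathbf{0}\,]$, whose fast-reaction columns are identically zero. Using the isomorphism between the original species-level transition system and the system re-coordinatised by the new variables (where, as argued above, transitions are preserved with identical labels), I would translate the operational step $P\twoheadrightarrow P'$, which arises from some fast reaction $\beta\in\Af$, into its effect on the new coordinates. Because $\beta$ contributes a zero column in the slow-variable block, the slow coordinates of $P'$ coincide with those of $P$ and only fast coordinates may change: if $P$ has coordinates $(s_1,\ldots,s_{n_s},f_1,\ldots,f_{n_f})$ then $P'$ has coordinates $(s_1,\ldots,s_{n_s},f'_1,\ldots,f'_{n_f})$ for some values $f'_k$.

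With this in hand the matching is immediate. Take $(P,Q)\in\calR$ with $P$ the component carrying fast variables and $Q$ the reduced component; by hypothesis their slow coordinates agree, $s_i=s'_i$. Given $P\twoheadrightarrow P'$, the previous step shows $P'$ still has slow coordinates $(s_1,\ldots,s_{n_s})$, which still agree with those of $Q$. I would then choose $Q'=Q$: since $(\twoheadrightarrow)^*$ admits the empty sequence we have $Q\xyRightarrow{\ }Q$, and as membership in $\calR$ is governed solely by equality of the slow-variable values, $(P',Q)\in\calR$. For the symmetric pairs, where the reduced component $Q$ stands on the left, any fast step $Q\twoheadrightarrow Q''$ would likewise fix all slow coordinates; as $Q$ carries no fast coordinates at all, necessarily $Q''=Q$, and the step is matched by the empty move $P\xyRightarrow{\ }P$, keeping the pair in $\calR$.

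I expect the main obstacle to lie in the second paragraph rather than in the matching itself. On the one hand, one must make the link between the purely operational arrow $\twoheadrightarrow$ and the linear-algebraic zero block of $\mathbf{Q}$ fully precise, checking it uniformly for the modifier, activator and inhibitor prefixes (which leave levels unchanged) as well as for genuine reactant and product reactions. On the other hand, care is needed over the reading of the hypothesis: it is not enough that every pair of $\calR$ \emph{happens} to have matching slow variables; $\calR$ must relate every reachable fast-variable configuration sharing those slow values, so that a fast step is guaranteed to remain inside $\calR$. It is exactly this ``saturation by slow-variable equality'' of $\calR$, together with the slow-variable preservation of $\twoheadrightarrow$, that makes the choice $Q'=Q$ legitimate and closes the argument.
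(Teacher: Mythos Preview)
Your proposal is correct and follows essentially the same route as the paper's own proof: reduce to the fast-transition clause, use that a fast step leaves the slow coordinates unchanged so the target state is $(s_1,\ldots,s_{n_s},f'_1,\ldots,f'_{n_f})$, and match with the empty move $Q\xyRightarrow{\ }Q$. The paper's proof is terser---it simply writes down the target state with the same $s_i$'s rather than invoking the zero block of $\mathbf{Q}$ explicitly, and for the symmetric direction it just remarks that ``there are no fast actions from $(s_1,\ldots,s_{n_s})$ to consider'' rather than arguing that any such step would fix the state---but the logical content is the same.
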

\begin{proof}
Let $\calR$ be a slow bisimulation with the required condition. Hence we
need to consider fast actions only.
Let $\bigl((s_1,\ldots,s_{n_s},f_1,\ldots,f_{n_f}),
(s_1,\ldots,s_{n_s})\bigr) \in \calR$ and consider
the fast transition such that
$(s_1,\ldots,s_{n_s},f_1,\ldots,f_{n_f}) \!\twoheadrightarrow \!
(s_1,\ldots,s_{n_s},f'_1,\ldots,f'_{n_f})$. We know that
$(s_1,\ldots,s_{n_s}) \!\xyRightarrow{\ }\!
(s_1,\ldots,s_{n_s})$ and also that
$\bigl((s_1,\ldots,s_{n_s},f'_1,\ldots,f'_{n_f}),
(s_1,\ldots,s_{n_s})\bigr) \in \calR$. There are no fast actions
from $(s_1,\ldots,s_{n_s})$ to consider.
\end{proof}
\noindent Given two Bio-PEPA models, the general technique can be summarised as follows. 
\begin{enumerateless}
\item Classify the variables in each model. Check that one model only
has slow variables and that the slow variables are species and the same between
models. If not, try different variable orderings.
\item Transform the transition systems of both models as described
above.
\item Define a relation over the transformed transition systems of the
two models where slow variables
have the same value in each pair in the relation.
\item Check that this relation is a slow bisimulation,
and use Proposition~\ref{prop:slowfast} to show that that it is a
fast-slow bisimulation.
\item Since the transformation has provided an isomorphic transition
system, the original models are fast-slow bisimilar.
\end{enumerateless}




\section{Competitive inhibition}\label{sec:example}


We now consider an example where there are significantly different
rates and hence a suitable test case for
fast-slow bisimulation. 
It is an example of competitive inhibition \cite{SegeS:93} where
an inhibitor is introduced, giving the reactions
$S + \EI \: \xleftrightarrow{\ \ \ } S + E + I \: \xleftrightarrow{\ \
\ } \SE + I
\: \rightarrow \: P + E + I$.
Here, the first reversible reaction describes how the enzyme and
inhibitor can bind together to form a compound. The second reversible
reaction shows how the
substrate and enzyme bind together to form a compound from which
the product can be obtained. The binding of the inhibitor and enzyme
competes with the binding of the enzyme and substrate since when
the enzyme is bound to the inhibitor it is not available for the
reaction with the substrate and hence reduces the amount of product
that can be produced. Then $\Xi = \{S, E, I\}$, $\Psi = \{P,E,I\}$
and $\Upsilon = \{\EI, \SE\}$ since $\EI$ and $\SE$ are the intermediate
species (as defined in Section~\ref{sec:qssa}) created by these
reactions. Because of the explicit representations of the
inhibitor and enzyme, and their associated intermediates, we choose
to model the basic bimolecular reactions with mass actions kinetics.

These reactions can be expressed in Bio-PEPA as follows, where
$\alpha_1$ and $\alpha_{-1}$ are the reactions involving enzyme and
inhibitor, $\beta_1$ and $\beta_{-1}$ are the reactions involving
substrate and enzyme, and $\gamma$ is the reaction that produces
the product.

\smallskip

\hspace*{-0.2cm}
{\renewcommand{\arraystretch}{1.35}
$\begin{array}{rclrclrcl}
\hspace*{-0.2cm}
S & \rmdef & (\beta_1,1)\set{\downarrow} S + (\beta_{-1},1)\set{\uparrow} S &
P & \rmdef & (\gamma,1)\set{\uparrow} P \quad \quad &
I & \!\!\!\rmdef\!\!\! & (\alpha_1,1)\set{\uparrow} I +
(\alpha_{-1},1)\set{\downarrow} I \\
\hspace*{-0.2cm}
\EI & \rmdef & (\alpha_1,1)\set{\downarrow} \EI +
(\alpha_{-1},1)\set{\uparrow} \EI
\quad\quad &
\SE & \rmdef & \multicolumn{4}{l}{(\beta_1,1)\set{\uparrow} \SE + 
(\beta_{-1},1)\set{\downarrow} \SE + (\gamma,1)\set{\downarrow} \SE} \\
\hspace*{-0.2cm}
E & \rmdef & \multicolumn{7}{l}{(\alpha_1,1)\set{\uparrow} E + 
(\alpha_{-1},1)\set{\downarrow} E + (\beta_1,1)\set{\downarrow} E + 
  (\beta_{-1},1)\set{\uparrow} E + (\gamma,1)\set{\uparrow} E} \\
\hspace*{-0.2cm}
Sys & \rmdef & \multicolumn{7}{l}{S(l_S) \syncstar E(l_E) 
\syncstar I(l_I) \syncstar P(l_P) \syncstar  EI(l_{\EI}) \syncstar
SE(l_{\SE})} \\
\end{array}$}

\noindent Here, based on biological understanding, we set 
$\{\alpha_1,\alpha_{-1},\beta_1,\beta_{-1}\} = \Af$, namely
that these are the fast reactions, and that $\gamma \in \As$.

We wish to show that this is fast-slow bisimilar to the simpler
Bio-PEPA model defined as follows. In this model, only a single
reaction is modelled and this reaction has a rate which takes
into account the amount of enzyme and inhibitor present.  This
reaction is considered to be a slow reaction.  Since this reaction
is not based on mass actions kinetics, the inhibitor and enzyme
prefix operators are used.  The reaction is named $\gamma$ as it
produces $P$, as in the previous model.  The use of primes on species
and model names is a syntactic convenience to distinguish different
species and model components. However, later on we will view $P$
and $P'$ as the same when we consider $\wmod$.

\smallskip

{\renewcommand{\arraystretch}{1.35}
$\begin{array}{rclrclrclrcl}
S' & \rmdef & (\gamma,1)\set{\downarrow} S' \quad \quad &
E' & \rmdef & (\gamma,1)\set{\oplus} E' \quad \quad &
I' & \rmdef & (\gamma,1)\set{\ominus} I' \quad \quad &
P' & \rmdef & (\gamma,1)\set{\uparrow} P' \\
&  &  & Sys' & \rmdef & \multicolumn{7}{l}{S'(l_{S'}) \syncstar E'(l_{E'}) 
\syncstar I'(l_{I'}) \syncstar P'(l_{P'})} \\
\end{array}$}

\smallskip

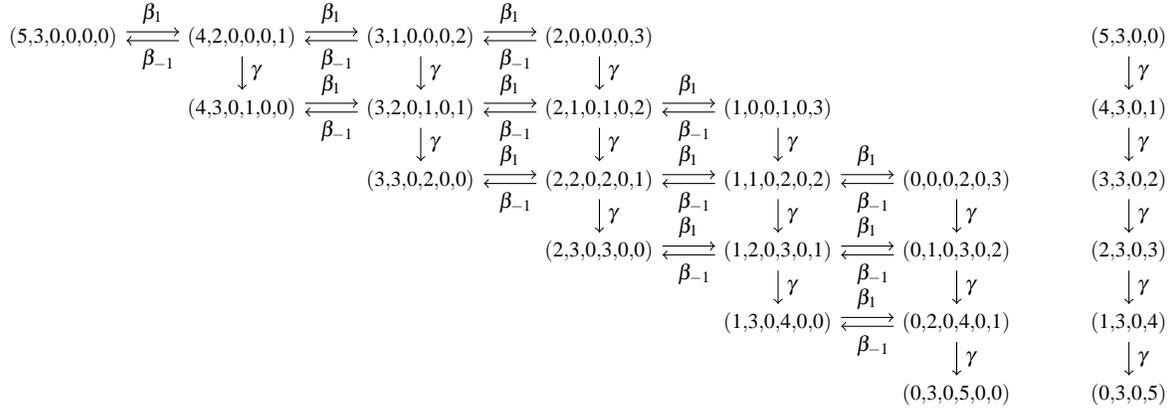
\begin{figure*}
\begin{tabular}{cccc}
\begin{tikzpicture}[xscale=0.95,yscale=0.95,inner sep=3pt] 

\node (t5300) at (0,0) {$\sst (5,3,0,0,0,0)$};
\node (t4210) at (2.5,0) {$\sst (4,2,0,0,0,1)$};
\node (t3120) at (5.0,0) {$\sst (3,1,0,0,0,2)$};
\node (t2030) at (7.5,0) {$\sst (2,0,0,0,0,3)$};

\node (t4301) at (2.5,-1) {$\sst (4,3,0,1,0,0)$};
\node (t3211) at (5.0,-1) {$\sst (3,2,0,1,0,1)$};
\node (t2121) at (7.5,-1) {$\sst (2,1,0,1,0,2)$};
\node (t1031) at (10.0,-1) {$\sst (1,0,0,1,0,3)$};

\node (t3302) at (5.0,-2) {$\sst (3,3,0,2,0,0)$};
\node (t2212) at (7.5,-2) {$\sst (2,2,0,2,0,1)$};
\node (t1122) at (10.0,-2) {$\sst (1,1,0,2,0,2)$};
\node (t0032) at (12.5,-2) {$\sst (0,0,0,2,0,3)$};

\node (t2303) at (7.5,-3) {$\sst (2,3,0,3,0,0)$};
\node (t1213) at (10.0,-3) {$\sst (1,2,0,3,0,1)$};
\node (t0123) at (12.5,-3) {$\sst (0,1,0,3,0,2)$};

\node (t1304) at (10.0,-4) {$\sst (1,3,0,4,0,0)$};
\node (t0214) at (12.5,-4) {$\sst (0,2,0,4,0,1)$};

\node (t0305) at (12.5,-5) {$\sst (0,3,0,5,0,0)$};

\path[->,draw] (node cs:name=t5300, angle=5) to 
     (node cs:name=t4210, angle=175) node[midway,above] {$\sst \beta_1$};
\path[->,draw] (node cs:name=t4210, angle=5) to 
     (node cs:name=t3120, angle=175) node[midway,above] {$\sst \beta_1$};
\path[->,draw] (node cs:name=t3120, angle=5) to 
     (node cs:name=t2030, angle=175) node[midway,above] {$\sst \beta_1$};

\path[->,draw] (node cs:name=t4210, angle=183) to 
(node cs:name=t5300, angle=357) node[midway,below] {$\sst\:\:\:\:\beta_{-1}$};
\path[->,draw] (node cs:name=t3120, angle=183) to 
(node cs:name=t4210, angle=357) node[midway,below] {$\sst\:\:\:\:\beta_{-1}$};
\path[->,draw] (node cs:name=t2030, angle=183) to 
(node cs:name=t3120, angle=357) node[midway,below] {$\sst\:\:\:\:\beta_{-1}$};

\path[->,draw] (node cs:name=t4301, angle=5) to 
     (node cs:name=t3211, angle=175) node[midway,above] {$\sst \beta_1$};
\path[->,draw] (node cs:name=t3211, angle=5) to 
     (node cs:name=t2121, angle=175) node[midway,above] {$\sst \beta_1$};
\path[->,draw] (node cs:name=t2121, angle=5) to 
     (node cs:name=t1031, angle=175) node[midway,above] {$\sst \beta_1$};

\path[->,draw] (node cs:name=t3211, angle=183) to 
(node cs:name=t4301, angle=357) node[midway,below] {$\sst\:\:\:\:\beta_{-1}$};
\path[->,draw] (node cs:name=t2121, angle=183) to 
(node cs:name=t3211, angle=357) node[midway,below] {$\sst\:\:\:\:\beta_{-1}$};
\path[->,draw] (node cs:name=t1031, angle=183) to 
(node cs:name=t2121, angle=357) node[midway,below] {$\sst\:\:\:\:\beta_{-1}$};

\path[->,draw] (node cs:name=t3302, angle=5) to 
     (node cs:name=t2212, angle=175) node[midway,above] {$\sst \beta_1$};
\path[->,draw] (node cs:name=t2212, angle=5) to 
     (node cs:name=t1122, angle=175) node[midway,above] {$\sst \beta_1$};
\path[->,draw] (node cs:name=t1122, angle=5) to 
     (node cs:name=t0032, angle=175) node[midway,above] {$\sst \beta_1$};

\path[->,draw] (node cs:name=t2212, angle=183) to 
(node cs:name=t3302, angle=357) node[midway,below] {$\sst\:\:\:\:\beta_{-1}$};
\path[->,draw] (node cs:name=t1122, angle=183) to 
(node cs:name=t2212, angle=357) node[midway,below] {$\sst\:\:\:\:\beta_{-1}$};
\path[->,draw] (node cs:name=t0032, angle=183) to 
(node cs:name=t1122, angle=357) node[midway,below] {$\sst\:\:\:\:\beta_{-1}$};

\path[->,draw] (node cs:name=t2303, angle=5) to 
     (node cs:name=t1213, angle=175) node[midway,above] {$\sst \beta_1$};
\path[->,draw] (node cs:name=t1213, angle=5) to 
     (node cs:name=t0123, angle=175) node[midway,above] {$\sst \beta_1$};

\path[->,draw] (node cs:name=t1213, angle=183) to 
(node cs:name=t2303, angle=357) node[midway,below] {$\sst\:\:\:\:\beta_{-1}$};
\path[->,draw] (node cs:name=t0123, angle=183) to 
(node cs:name=t1213, angle=357) node[midway,below] {$\sst\:\:\:\:\beta_{-1}$};

\path[->,draw] (node cs:name=t1304, angle=5) to 
     (node cs:name=t0214, angle=175) node[midway,above] {$\sst \beta_1$};

\path[->,draw] (node cs:name=t0214, angle=183) to 
 (node cs:name=t1304, angle=357) node[midway,below] {$\sst\:\:\:\:\beta_{-1}$};

\path[->,draw] (t4210) to (t4301) node[midway,right] {$\sst \gamma$};

\path[->,draw] (t3120) to (t3211) node[midway,right] {$\sst \gamma$};
\path[->,draw] (t3211) to (t3302) node[midway,right] {$\sst \gamma$};

\path[->,draw] (t2030) to (t2121) node[midway,right] {$\sst \gamma$};
\path[->,draw] (t2121) to (t2212) node[midway,right] {$\sst \gamma$};
\path[->,draw] (t2212) to (t2303) node[midway,right] {$\sst \gamma$};

\path[->,draw] (t1031) to (t1122) node[midway,right] {$\sst \gamma$};
\path[->,draw] (t1122) to (t1213) node[midway,right] {$\sst \gamma$};
\path[->,draw] (t1213) to (t1304) node[midway,right] {$\sst \gamma$};

\path[->,draw] (t0032) to (t0123) node[midway,right] {$\sst \gamma$};
\path[->,draw] (t0123) to (t0214) node[midway,right] {$\sst \gamma$};
\path[->,draw] (t0214) to (t0305) node[midway,right] {$\sst \gamma$};

\end{tikzpicture} 

& & 

\begin{tikzpicture}[xscale=0.95,yscale=0.95,inner sep=3pt] 

\node (t530) at (0,0) {$\sst (5,3,0,0)$};

\node (t431) at (0,-1) {$\sst (4,3,0,1)$};

\node (t332) at (0,-2) {$\sst (3,3,0,2)$};

\node (t233) at (0,-3) {$\sst (2,3,0,3)$};

\node (t134) at (0,-4) {$\sst (1,3,0,4)$};

\node (t035) at (0,-5) {$\sst (0,3,0,5)$};

\path[->,draw] (node cs:name=t530, angle=270) to 
      (node cs:name=t431, angle=90) node[midway,right] {$\sst \gamma$};
\path[->,draw] (node cs:name=t431, angle=270) to 
      (node cs:name=t332, angle=90) node[midway,right] {$\sst \gamma$};
\path[->,draw] (node cs:name=t332, angle=270) to 
      (node cs:name=t233, angle=90) node[midway,right] {$\sst \gamma$};
\path[->,draw] (node cs:name=t233, angle=270) to 
      (node cs:name=t134, angle=90) node[midway,right] {$\sst \gamma$};
\path[->,draw] (node cs:name=t134, angle=270) to 
      (node cs:name=t035, angle=90) node[midway,right] {$\sst \gamma$};

\end{tikzpicture} 
\end{tabular}

\caption{Transition system for $Sys$ and $Sys'$ for $n=5$ and $m=3$ with no
inhibitor present (only reaction names appear on transitions).}\label{fig:tss}
\end{figure*}
From the equations, it is clear that for a starting level of the
substrate $S$ (or $S'$) given by $l_S=n$ it is not possible to reach
more than $n$ levels of $P$ (alternatively $P'$) if the starting level
of $P$ is set to $l_P=0$.  This agrees with the biological understanding
that these reactions represent a transformation of $S$ to $P$ through a
number of steps.

We assume neither of the intermediates nor the product are
present at the start in the more complex model as is standard
\cite{SegeSS:89}.
Hence, for a starting level of $m$ of the enzyme, it is not possible
to have more than $m$ levels of the substrate-enzyme compound, and
for a starting level of $m$ of the enzyme and $p$ of the inhibitor,
it is not possible to have more than $\min\{m,p\}$ levels of the
enzyme-inhibitor compound. 

As mentioned above, a well-defined Bio-PEPA model only differs from
its derivatives in the levels of the species and models and derivatives
can be expressed in numeric vector form.  For example, for $Sys$
the vector $(2,0,3,1,0,4)$ describes the model with 2 levels of
$S$, none of $E$, 3 of $I$, 1 of the product $P$, none of the
compound $\EI$ and 4 of the compound $\SE$.

\vspace*{-2.7mm}

\subsection{Constructing the bisimulation}

Under the initial species levels described above, there are four
cases of interest: only substrate present, substrate
and enzyme present, substrate and inhibitor present, and substrate,
enzyme and inhibitor present. 
These can be
considered in one relation over the two models with starting vectors
$(n,m,p,0,0,0)$ (using the ordering $(l_{S},l_{E},l_{I},l_{P},l_{\EI},l_{\SE})$)
and $(n,m,p,0)$ (using the ordering
$(l_{S'},l_{E'},l_{I'},l_{P'})$) with $n > 0$ and $m, p \geq 0$.
Figure~\ref{fig:tss} illustrates the case when $n=5$, $m=3$ and $p=0$.
This case with no inhibitor represents an instance of the standard
Michaelis-Menten mechanism \cite{SegeS:93} as discussed in
Section~\ref{sec:qssa}.

\newpage
To define fast-slow bisimulation, we must determine which
non-intermediate species are in the set $\NonInt$. 
The label on the transition of a $\gamma$ reaction in $Sys$ is
$(\gamma,w)$ where $w=
\{ P{:}\set{\uparrow}(1,i_1), SE{:}\set{\downarrow}(1,i_2),
E{:}\set{\uparrow}(1,i_3) \}$. For a $\gamma$ in $Sys'$, the set is
$\{ P'{:}\set{\uparrow}(1,j_1), S'{:}\set{\downarrow}(1,j_2),
E'{:}\set{\oplus}(1,j_3),I'{:}\set{\ominus}(1,j_4)\}$. We only want
to compare species that appear in both sets and that have the same
role, hence we let $\NonInt = \{ P \}= \{P'\}$. We will show below that
the product is also the slow variable of both systems, illustrating
another way to determine $\NonInt$.

\begin{figure}
\begin{center}

\hspace*{-0.5cm}
{\renewcommand{\arraystretch}{0.85}
\begin{tabular}{c|c}
$Sys$: new variables & $Sys'$: new variables \\

$\begin{array}{rclclcll}
X_{S_T} & \!=\! & S + SE + P  & \!=\! & S_0 & \!=\! & n
&\text{conserved\ \ } \\
X_{E_T} & \!=\! & E + \EI + \SE & \!=\! & E_0 & \!=\! & m &\text{conserved} \\
X_{I_T} & \!=\! & \EI + I      & \!=\! & I_0 & \!=\! & p &\text{conserved} \\
X_{P} & \!=\! & P             &   &     & \!=\! & k &\text{slow} \\
X_{EI} & \!=\! & \EI           &   &     & \!=\! & l &\text{fast} \\
X_{SE} & \!=\! & \SE           &   &     & \!=\! & j &\text{fast} \\
\end{array}$

& 

$\begin{array}{rclclcll}
\ \ X_{S'_T} & \!=\! & S' + P' & \!=\! & S_0 & \!=\! & n &\text{conserved} \\
X_{E'} & \!=\! & E'_0      &   &     & \!=\! & m &\text{conserved} \\
X_{I'} & \!=\! & I'_0      &   &     & \!=\! & p &\text{conserved} \\
X_{P'} & \!=\! & P'        &   &     & \!=\! & k &\text{slow} \\
\\
\\
\end{array}$
\\

new state: $(P,\EI,\SE)$ & new state: $(P)$
\end{tabular}}
\end{center}
\caption{Identification of conserved, fast and slow variables}
\label{fig:variables}
\end{figure}

Next define the relation $\calR$ as 

{\renewcommand{\arraystretch}{0.30}
$\begin{array}{c}
\\
\hspace*{-0.8cm}
\bigl\{ ((n\mns (k\pls j),m\mns (j\pls l),p\mns l,k,l,j),
(n\mm k,m,p,k)) \mid 
  0 \!\leq\! k \!\leq\! n, 0 \!\leq\! j \!\leq\! \min\{m,n\mns k\}, 
  0 \!\leq\! l \!\leq\! p, j\mpl l \!\leq\! m \bigr\}. \\
\\
\end{array}$}

\noindent This captures the idea suggested by Figure~\ref{fig:tss} that states
with the same level of product are those that should be paired in
$\calR$. We now show that $\calR$ is a fast-slow bisimulation
for $\Af$ using the approach given in the previous section.
Figure~\ref{fig:variables} provides the new variables for each model.
Here, variables with subscript $0$ indicate initial values for those
species. First three invariants are identified, then we consider
just the fast transitions and this allows us to determine which
species are not affected by the fast transitions. $P$ is not affected
and neither is $S + \SE$. Since these are not linearly independent
(due to the first invariant), we need to choose one of them, and
we choose $P$ since it is a single species. There are no other
linearly independent slow variables so we need to find two fast
variables that are linearly independent from each other and the
four defined variables.  $\EI$ and $\SE$ are suitable candidates. The
technique can also be applied to the variables in $Sys'$ where there
are no fast variables since the only reaction $\gamma$ is slow.

Hence the states of the transition systems can be transformed
without changing the labels on the transitions. 
The new transition systems
have the same form as the original transition systems, but the new states
are vectors with the first three elements of the original vector
removed. A new relation can be defined
over these new transition systems that preserves the relationship between
states. Let
$\calR' = \bigl\{ ((k,l,j), (k)) \mid 
  0 \!\leq\! k \!\leq\! n, 0 \!\leq\! j \!\leq\! \min\{m,n\mns k\}, 
  0 \!\leq\! l \!\leq\! p, j\mpl l \!\leq\! m \bigr\}$.

Since $\calR'$ has the form required for the application of Proposition
\ref{prop:slowfast}, and the two models have the same slow variables,
if $\calR'$ is a slow bisimulation for $\As$, then it is a fast-slow
bisimulation for $\Af$.  The new transition system is isomorphic
to the original transition system and the relationship between
states is preserved by $\calR'$, hence $\calR$ is also a fast-slow
bisimulation for $\Af$ over the original transition system.

We now proceed with the proof that $\calR'$ is a slow bisimulation
for $\As$.  For notational convenience, we let $\wP_i = \{
P{:}\set{\uparrow}(1,i) \}$, and consider in turn the different cases for
which there are $\gamma$ transitions.

\begin{itemizeless}
\item Consider $((k,l,j),(k)) \in \calR'$ for $0\leq k<n$, $0 \leq
l \leq p$, $0<j\leq \min\{m,n\mm k\}$ which represent states where
some substrate-enzyme compound available. Then $(k)
\xyrightarrow{\gamma,\wP_k} (k\pls 1)$ is matched by $(k,l,j)
\xyrightarrow{\gamma,\wP_k} (k\pls 1,l,j\mns 1)$ and \emph{vice
versa}.

\item
Consider $((k,l,0),(k)) \in \calR'$ for $0 \leq k < n$, $0 \leq l
\leq p$ when no substrate-enzyme compound is present. There are
three cases depending on the relationship of $m$ and $p$.

\newpage

\begin{itemizeless}
\item
If $m>p$, consider $0 \leq l \leq p$.
Since $m$ is greater than $p$, whatever $l$ is, there will be
additional enzyme to form $SE$ and $(k) \xyrightarrow{\gamma,\wP_k}
(k\pls 1)$ is matched by $(k,l,0) \twoheadrightarrow (k,l,1)
\xyrightarrow{\gamma,\wP_k} (k\mpl 1,l,0)$.

\item
If $m \leq p$ and $0 \leq l \leq m\mm 1$, then enzyme is available
and the previous case applies.

\item 
If $m \leq p$ and $l=m$, then all
enzyme is bound in $\EI$. Then $(k)$ $\xyrightarrow{\gamma,\wP_k}$
$(k\pls 1)$ is matched by $(k,0,m)$ $\twoheadrightarrow$ $(k,0,m\mm
1)$ $\twoheadrightarrow$ $(k,1,m\mm 1)$ $\xyrightarrow{\gamma,\wP_k}$
$(k\mpl 1,0,m\mm 1)$.

\end{itemizeless}

\end{itemizeless}
An example of the first subcase is illustrated in Figure~\ref{fig:tss}
in the unmodified transition system.  Consider the pair of states
$((2,3,0,3,0,0),(2,3,0,3))\in \calR$. The $\gamma$-transition from
$(2,3,0,3)$ to $(1,3,0,4)$ is matched by a fast transition from
$(2,3,0,3,0,0)$ to $(1,2,0,3,0,1)$ and a $\gamma$-transition from
the latter to $(1,3,0,4,0,0)$ and $((1,3,0,4,0,0),(1,3,0,4))\in
\calR$.

To conclude, we have shown for $\{\alpha_1,\alpha_{-1},\beta_1,\beta_{-1}\}
\subseteq \Af $, $\gamma \in \As$ for the models $Sys$ and $Sys'$
that 
$(n,m,p,0,0,0) \thickapprox_{\!\Af} (n,m,p,0)$ for
all positive $n$, $m$ and $p$ which covers all major cases of
interest. Hence, we can conclude that the simpler model demonstrates
the same behaviour (at a semi-quantitative level) as the more complex
model when we abstract from fast reactions.  We can apply the
congruence result: if $P$ is a Bio-PEPA model with no fast reactions
in $\{\alpha_1,\alpha_{-1},\beta_1,\beta_{-1}\}$, then since $Sys
\thickapprox_{\!\Af} Sys'$, we know that $P \syncstar Sys
\thickapprox_{\!\Af} P \syncstar Sys'$. This allows us to build new
systems, and also to replace the larger state space of $Sys$ with
the smaller one of $Sys'$.

\section{Related and further work}\label{sec:relwork}

Various approaches to modelling biological systems using process
algebra have been proposed including $\kappa$-calculus \cite{DanoDL:04a},
$\pi$-calculus \cite{RegeRS:02,PriaPRSS:01a,BlosBCP:06a}, Beta-binders
\cite{PriaPQ:04a}, Bio-Ambients \cite{RegeRPSCS:04a} sCCP
\cite{BortBP:08a} and the
continuous $\pi$-calculus \cite{KwiaKS:08}.  Most of these approaches
use stochastic simulation as their analysis tool, and very few
approaches have considered the use of semantic equivalences.  Both weak
bisimulation and context bisimulation are shown to be congruences for the
\texttt{bio}-$\kappa$-calculus. Context bisimulation 
allows for the modelling of cell interaction \cite{LaneLT:08}.
Observational equivalence has been used to show that CCS specifications
of elements of lactose operon regulation have the same behaviour
as more detailed models \cite{PintPFMR:07}.  In an example of
biological modelling using hybrid systems, bisimulation is used to
quotient the state space with respect to a subset of variables as
a technique for state space reduction \cite{AntoAPPSM:04}. Bisimulation
has also been used in the comparison of ambient-style models and
membrane-style models \cite{CiobCA:08} and the comparison of a
term-rewriting calculus and a simple brane calculus \cite{BarbBMMT:08}.
Other equivalences have been defined for Bio-PEPA.
Compression bisimilarity is based on the idea that different
discretisations of a system should have the same behaviour assuming
sufficient levels \cite{GalpGH:10}.  
Strong and weak 
bisimulation parameterised
by functions have also been defined \cite{GalpG:10}
and their use demonstrated on a model with alternative pathways. 
Further work is to determine whether fast-slow bisimilarity can be
expressed as $g$-bisimilarity.

Although fast-slow bisimilarity is defined in the context of Bio-PEPA, it
is applicable to any formalism with the same style of stratification
of molecular counts or discretisation of concentrations, such as the 
Petri net-based modelling framework of Heiner \emph{et al}
\cite{HeinHGD:08}.

QSSA has also been applied to stochastic simulation
\cite{GillG:77} either to obtain approximate rates \cite{CaoCGP:05a}
or in the case of slow-scale stochastic simulation
\cite{CaoCGP:05a,CaoCGP:05b} to identity slow and fast species which
then leads to the introduction of a virtual fast process representing
the fast species where slow reactions are removed.

As mentioned earlier, QSSA is a time scale separation technique.
There are other variants such as tQSSA which consider the total
substrate (both free and bound) in deriving reduced equations and is
applicable when $S_0 + K_m \gg E_T$ does not hold
\cite{BorgBBS:96}. QSSA approaches have been formalised by single
perturbation theory \cite{SegeSS:89,ZagaZKK:04}.

Another form of time scale decomposition/separation considers CTMCs and is
based on a decomposition/ aggregation technique for solving for
steady state. In a nearly completely decomposable CTMC, the values
in the diagonal blocks are much larger (at least one order of magnitude)
than those in the off-diagonal
blocks \cite{CourC:77}. Hence there are blocks of states where transitions
between states in an block is much more frequent than transitions
between states in different blocks. The technique involves solving for
steady state for each block (ignoring transitions to other blocks). Each
block is then considered as a single state, and transition rates between
these states are computing, and the aggregate CTMC constructed is
solved. Finally, the solutions for each block and the aggregate CTMC
are combined to obtain an approximate solution for the original CTMC.

This technique has been applied to both stochastic Petri nets
\cite{BlakBT:93} and stochastic process algebra \cite{HillHM:95}.
For Petri nets, a function is defined over markings to determine
which markings are similar and must take into account relative
rates.  In the case of stochastic process algebra, an analysis of
processes and the rates of the actions they enable is the starting
point for identifying subsets of states. Sequential components are
categorised as fast, slow and hybrid, and states are grouped when
they have the same slow subcomponents. The passive rate can be used
to split hybrid processes into two sequential processes with the
same behaviour.  A time scale decomposition technique for transient
analysis \cite{BobbBT:86} is also relevant because our model considers
transient behaviour as well as steady state behaviour.  Since we
are not working fully quantitatively here, these approaches are
issues for further research. Specifically, we wish to compare the
application of the technique for nearly completely decomposable CTMCs with
a QSSA-based quantitative equivalence, as well as considering transient
behaviour.

Quantitative equivalences have been defined for CTMCs based on
Kripke structures, hence with unlabelled transitions and labelled
states \cite{BaieBKHW:05}. Both weak bisimulation and weak simulation
are defined. Further research involves applying these equivalences,
after suitable modification to CTMCs obtained from labelled transition
systems and seeing their relationship with the QSSA. 

Most previous CTMC research assumes fixed rates; however, with
Bio-PEPA rates are state-dependent which introduces additional
complexity.

\section{Conclusion}\label{sec:conclusion}

We have developed fast-slow bisimilarity, a semi-quantitative semantic
equivalence motivated by the Quasi-Steady-State Assumption. We show that
for two operators of interest, fast-slow bisimilarity is a congruence.
For the cooperation operator, a reasonable condition is required to ensure
congruence. The second operator is an
extension operator which allows a species to be extended with new reactive
capabilities. Although the definition of fast-slow bisimilarity is 
similar to that of weak bisimilarity, the condition for congruence for
cooperation illustrates how they differ. For certains types of reduced
models, it is possible to work with slow bisimilarity which only
considers slow reactions.  The use of fast-slow bisimilarity is
illustrated with an example of competitive inhibition, where one
system includes the intermediate compounds and the other does not.

This equivalence can be used to show that a reduced system has the
same behaviour as the full system. This means it is possible to
work only with the reduced system, thereby reducing the number of
parameters that need to be fitted.  Fast-slow bisimilarity can be
applied in any context where concentrations are discretised or
molecule counts are grouped.

Further work includes a fully quantitative equivalence, automation of
the bisimulation technique including variable reduction and
investigation of dynamically changing the sets of fast and slow
reactions.

\medskip
\newpage

\textbf{Acknowledgements:}
This work was supported by the EPSRC through Projects
EP/C54370X/01, EP/E031439/1 and EP/C543696/01. The Centre for
Systems Biology at Edinburgh (CSBE) is a Centre for Integrative Systems
Biology (CISB) funded by the BBSRC and EPSRC in 2006.


\vspace*{-0.01cm}

\bibliographystyle{eptcs}
\bibliography{CompModPaper4}

\begin{thebibliography}{10}
\providecommand{\bibitemdeclare}[2]{}
\providecommand{\urlprefix}{Available at }
\providecommand{\url}[1]{\texttt{#1}}
\providecommand{\href}[2]{\texttt{#2}}
\providecommand{\urlalt}[2]{\href{#1}{#2}}
\providecommand{\doi}[1]{doi:\urlalt{http://dx.doi.org/#1}{#1}}
\providecommand{\bibinfo}[2]{#2}

\bibitemdeclare{article}{AntoAPPSM:04}
\bibitem{AntoAPPSM:04}
\bibinfo{author}{M.~Antoniotti}, \bibinfo{author}{C.~Piazza},
  \bibinfo{author}{A.~Policriti}, \bibinfo{author}{M.~Simeoni} \&
  \bibinfo{author}{B.~Mishra} (\bibinfo{year}{2004}):
  \emph{\bibinfo{title}{Taming the complexity of biochemical models through
  bisimulation and collapsing: theory and practice}}.
\newblock {\sl \bibinfo{journal}{Theoretical Computer Science}}
  \bibinfo{volume}{325}, pp. \bibinfo{pages}{45--67},
  \doi{10.1016/j.tcs.2004.03.064}.

\bibitemdeclare{article}{BaieBKHW:05}
\bibitem{BaieBKHW:05}
\bibinfo{author}{C.~Baier}, \bibinfo{author}{J.-P. Katoen},
  \bibinfo{author}{H.~Hermanns} \& \bibinfo{author}{V.~Wolf}
  (\bibinfo{year}{2005}): \emph{\bibinfo{title}{Comparative branching-time
  semantics for Markov chains}}.
\newblock {\sl \bibinfo{journal}{Information and Computation}}
  \bibinfo{volume}{200}, pp. \bibinfo{pages}{149--214},
  \doi{10.1016/j.ic.2005.03.001}.

\bibitemdeclare{article}{BarbBMMT:08}
\bibitem{BarbBMMT:08}
\bibinfo{author}{R.~Barbuti}, \bibinfo{author}{A.~Maggiolo-Schettini},
  \bibinfo{author}{P.~Milazzo} \& \bibinfo{author}{A.~Troina}
  (\bibinfo{year}{2008}): \emph{\bibinfo{title}{Bisimulation in calculi
  modelling membranes}}.
\newblock {\sl \bibinfo{journal}{Formal Aspects of Computing}}
  \bibinfo{volume}{20}, pp. \bibinfo{pages}{351--377},
  \doi{10.1007/s00165-008-0071-x}.

\bibitemdeclare{inproceedings}{BlakBT:93}
\bibitem{BlakBT:93}
\bibinfo{author}{A.~Blakemore} \& \bibinfo{author}{S.K. Tripathi}
  (\bibinfo{year}{1993}): \emph{\bibinfo{title}{Automated time scale
  decomposition and analysis of stochastic {P}etri nets}}.
\newblock In: {\sl \bibinfo{booktitle}{$\!$Proceedings $\!$of $\!$Petri
  $\!$Nets $\!$and $\!$Performance $\!$Models}}, pp. \bibinfo{pages}{248--257},
  \doi{10.1109/PNPM.1993.393446}.

\bibitemdeclare{inproceedings}{BlosBCP:06a}
\bibitem{BlosBCP:06a}
\bibinfo{author}{R.~Blossey}, \bibinfo{author}{L.~Cardelli} \&
  \bibinfo{author}{A.~Phillips} (\bibinfo{year}{2006}): \emph{\bibinfo{title}{A
  compositional approach to the stochastic dynamics of gene networks}}.
\newblock In: {\sl \bibinfo{booktitle}{TCSB IV}}, \bibinfo{series}{LNCS 3939},
  \bibinfo{publisher}{Springer}, pp. \bibinfo{pages}{99--122},
  \doi{10.1007/11732488\_10}.

\bibitemdeclare{article}{BobbBT:86}
\bibitem{BobbBT:86}
\bibinfo{author}{A.~Bobbio} \& \bibinfo{author}{K.S. Trivedi}
  (\bibinfo{year}{1986}): \emph{\bibinfo{title}{An Aggregation Technique for
  the Transient Analysis of Stiff {M}arkov Chains}}.
\newblock {\sl \bibinfo{journal}{IEEE Transaction on Computers}}
  \bibinfo{volume}{35}, pp. \bibinfo{pages}{803--814},
  \doi{10.1109/TC.1986.1676840}.

\bibitemdeclare{article}{BorgBBS:96}
\bibitem{BorgBBS:96}
\bibinfo{author}{J.~Borghans}, \bibinfo{author}{R.~de~Boer} \&
  \bibinfo{author}{L.~Segel} (\bibinfo{year}{1996}):
  \emph{\bibinfo{title}{Extending the quasi-steady state approximation by
  changing variables}}.
\newblock {\sl \bibinfo{journal}{Bulletin of Mathematical Biology}}
  \bibinfo{volume}{58}, pp. \bibinfo{pages}{43--63}, \doi{10.1007/BF02458281}.

\bibitemdeclare{article}{BortBP:08a}
\bibitem{BortBP:08a}
\bibinfo{author}{L.~Bortolussi} \& \bibinfo{author}{A.~Policriti}
  (\bibinfo{year}{2008}): \emph{\bibinfo{title}{Modeling biological systems in
  stochastic {Concurrent Constraint Programming}}}.
\newblock {\sl \bibinfo{journal}{Constraints}} \bibinfo{volume}{13}, pp.
  \bibinfo{pages}{66--90}, \doi{10.1007/s10601-007-9034-8}.

\bibitemdeclare{article}{CaoCGP:05a}
\bibitem{CaoCGP:05a}
\bibinfo{author}{Y.~Cao}, \bibinfo{author}{D.~T. Gillespie} \&
  \bibinfo{author}{L.R. Petzold} (\bibinfo{year}{2005}):
  \emph{\bibinfo{title}{Accelerated stochastic simulation of the stiff
  enzyme-substrate reaction}}.
\newblock {\sl \bibinfo{journal}{Journal of Chemical Physics}}
  \bibinfo{volume}{123}(\bibinfo{number}{14}), p. \bibinfo{pages}{144917},
  \doi{10.1063/1.2052596}.

\bibitemdeclare{article}{CaoCGP:05b}
\bibitem{CaoCGP:05b}
\bibinfo{author}{Y.~Cao}, \bibinfo{author}{D.~T. Gillespie} \&
  \bibinfo{author}{L.R. Petzold} (\bibinfo{year}{2005}):
  \emph{\bibinfo{title}{The slow-scale stochastic simulation algorithm}}.
\newblock {\sl \bibinfo{journal}{Journal of Chemical Physics}}
  \bibinfo{volume}{122}(\bibinfo{number}{1}), p. \bibinfo{pages}{14116},
  \doi{10.1063/1.1824902}.

\bibitemdeclare{article}{CiobCA:08}
\bibitem{CiobCA:08}
\bibinfo{author}{G.~Ciobanu} \& \bibinfo{author}{B.~Aman}
  (\bibinfo{year}{2008}): \emph{\bibinfo{title}{On the relationship between
  membranes and ambients}}.
\newblock {\sl \bibinfo{journal}{BioSystems}} \bibinfo{volume}{91}, pp.
  \bibinfo{pages}{515--530}, \doi{10.1016/j.biosystems.2007.01.006}.

\bibitemdeclare{article}{CiocCG:09a}
\bibitem{CiocCG:09a}
\bibinfo{author}{F.~Ciocchetta} \& \bibinfo{author}{M.~L. Guerriero}
  (\bibinfo{year}{2009}): \emph{\bibinfo{title}{Modelling biological
  compartments in {Bio-PEPA}}}.
\newblock {\sl \bibinfo{journal}{Electronic Notes in Theoretical Computer
  Science}} \bibinfo{volume}{227}, pp. \bibinfo{pages}{77--95},
  \doi{10.1016/j.entcs.2008.12.105}.

\bibitemdeclare{inproceedings}{CiocCH:08a}
\bibitem{CiocCH:08a}
\bibinfo{author}{F.~Ciocchetta} \& \bibinfo{author}{J.~Hillston}
  (\bibinfo{year}{2008}): \emph{\bibinfo{title}{Process Algebras in Systems
  Biology}}.
\newblock In: {\sl \bibinfo{booktitle}{Formal Methods for Computational Systems
  Biology (SFM08)}}, \bibinfo{series}{LNCS 5016}, pp.
  \bibinfo{pages}{265--312}, \doi{10.1007/978-3-540-68894-5\_8}.

\bibitemdeclare{article}{CiocCH:09}
\bibitem{CiocCH:09}
\bibinfo{author}{F.~Ciocchetta} \& \bibinfo{author}{J.~Hillston}
  (\bibinfo{year}{2009}): \emph{\bibinfo{title}{{Bio-PEPA}: a framework for the
  modelling and analysis of biological systems}}.
\newblock {\sl \bibinfo{journal}{Theoretical Computer Science,}}
  \bibinfo{volume}{410}(\bibinfo{number}{33-34}), pp.
  \bibinfo{pages}{3065--3084}, \doi{10.1016/j.tcs.2009.02.037}.

\bibitemdeclare{incollection}{ClarCGGGH:12a}
\bibitem{ClarCGGGH:12a}
\bibinfo{author}{A.~Clark}, \bibinfo{author}{V.~Galpin},
  \bibinfo{author}{S.~Gilmore}, \bibinfo{author}{M.L. Guerriero} \&
  \bibinfo{author}{J.~Hillston} (\bibinfo{year}{2012}):
  \emph{\bibinfo{title}{Formal methods for checking the consistency of
  biological models}}.
\newblock In: {\sl \bibinfo{booktitle}{Advances in Systems Biology}}, {\sl
  \bibinfo{series}{Advances in Experimental Medicine and Biology}}
  \bibinfo{volume}{736}, \bibinfo{publisher}{Springer}.

\bibitemdeclare{inproceedings}{ClarCGGK:10a}
\bibitem{ClarCGGK:10a}
\bibinfo{author}{Allan Clark}, \bibinfo{author}{Stephen Gilmore},
  \bibinfo{author}{Maria~Luisa Guerriero} \& \bibinfo{author}{Peter Kemper}
  (\bibinfo{year}{2010}): \emph{\bibinfo{title}{On verifying {Bio-PEPA}
  models}}.
\newblock In: {\sl \bibinfo{booktitle}{Proceedings of CMSB 2010}},
  \bibinfo{publisher}{ACM Press}, pp. \bibinfo{pages}{23--32},
  \doi{10.1145/1839764.1839769}.

\bibitemdeclare{book}{CourC:77}
\bibitem{CourC:77}
\bibinfo{author}{P.J. Courtois} (\bibinfo{year}{1977}):
  \emph{\bibinfo{title}{Decomposability: Queueing and Computer System
  Applications}}.
\newblock \bibinfo{publisher}{Academic Press}.

\bibitemdeclare{article}{DanoDL:04a}
\bibitem{DanoDL:04a}
\bibinfo{author}{V.~Danos} \& \bibinfo{author}{C.~Laneve}
  (\bibinfo{year}{2004}): \emph{\bibinfo{title}{Formal molecular biology}}.
\newblock {\sl \bibinfo{journal}{Theoretical Computer Science}}
  \bibinfo{volume}{325}, pp. \bibinfo{pages}{69--110},
  \doi{10.1016/j.tcs.2004.03.065}.

\bibitemdeclare{article}{GalpG:10}
\bibitem{GalpG:10}
\bibinfo{author}{V.~Galpin} (\bibinfo{year}{2011}):
  \emph{\bibinfo{title}{Equivalences for a biological process algebra}}.
\newblock {\sl \bibinfo{journal}{Theoretical Computer Science, to appear,}}
  \doi{10.1016/j.tcs.2011.07.006}.

\bibitemdeclare{article}{GalpGH:10}
\bibitem{GalpGH:10}
\bibinfo{author}{V.~Galpin} \& \bibinfo{author}{J.~Hillston}
  (\bibinfo{year}{2011}): \emph{\bibinfo{title}{A semantic equivalence for
  {Bio-PEPA} based on discretisation of continuous values}}.
\newblock {\sl \bibinfo{journal}{Theoretical Computer Science}}
  \bibinfo{volume}{412}, pp. \bibinfo{pages}{2142--2161},
  \doi{10.1016/j.tcs.2011.01.007}.

\bibitemdeclare{article}{GillG:77}
\bibitem{GillG:77}
\bibinfo{author}{D.~T. Gillespie} (\bibinfo{year}{1997}):
  \emph{\bibinfo{title}{Exact stochastic simulation of coupled chemical
  reactions}}.
\newblock {\sl \bibinfo{journal}{Journal of Physical Chemistry}}
  \bibinfo{volume}{81}(\bibinfo{number}{25}), pp. \bibinfo{pages}{2340--2361},
  \doi{10.1021/j100540a008}.

\bibitemdeclare{article}{GomeGVT:08}
\bibitem{GomeGVT:08}
\bibinfo{author}{C.A. G\'omez-Uribe}, \bibinfo{author}{G.C. Verghese} \&
  \bibinfo{author}{A.R. Tzafriri} (\bibinfo{year}{2008}):
  \emph{\bibinfo{title}{Enhanced identification and exploitation of time scales
  for model reduction in stochastic chemical kinetics}}.
\newblock {\sl \bibinfo{journal}{Journal of Chemical Physics}}
  \bibinfo{volume}{129}, \doi{10.1063/1.3050350}.

\bibitemdeclare{inproceedings}{HeinHGD:08}
\bibitem{HeinHGD:08}
\bibinfo{author}{M.~Heiner}, \bibinfo{author}{D.~Gilbert} \&
  \bibinfo{author}{R.~Donaldson} (\bibinfo{year}{2008}):
  \emph{\bibinfo{title}{Petri nets for Systems and Synthetic Biology}}.
\newblock In: {\sl \bibinfo{booktitle}{Formal Methods for Computational Systems
  Biology (SFM08)}}, \bibinfo{series}{LNCS 5016}, pp.
  \bibinfo{pages}{215--264}, \doi{10.1007/978-3-540-68894-5\_7}.

\bibitemdeclare{book}{Hill96}
\bibitem{Hill96}
\bibinfo{author}{J.~Hillston} (\bibinfo{year}{1996}): \emph{\bibinfo{title}{A
  compositional approach to performance modelling}}.
\newblock \bibinfo{publisher}{CUP}.

\bibitemdeclare{article}{HillHM:95}
\bibitem{HillHM:95}
\bibinfo{author}{J.~Hillston} \& \bibinfo{author}{V.~Mertsiotakis}
  (\bibinfo{year}{1995}): \emph{\bibinfo{title}{A Simple Time Scale
  Decomposition Technique for Stochastic Process Algebras}}.
\newblock {\sl \bibinfo{journal}{Computer Journal}} \bibinfo{volume}{38}, pp.
  \bibinfo{pages}{566--577}, \doi{10.1093/comjnl/38.7.566}.

\bibitemdeclare{inproceedings}{KwiaKS:08}
\bibitem{KwiaKS:08}
\bibinfo{author}{M.~Kwiatkowski} \& \bibinfo{author}{I.~Stark}
  (\bibinfo{year}{2008}): \emph{\bibinfo{title}{The Continuous $\pi$-Calculus:
  A Process Algebra for Biochemical Modelling}}.
\newblock In: {\sl \bibinfo{booktitle}{Proceedings of CMSB 2008}},
  \bibinfo{series}{LNCS 5307}, pp. \bibinfo{pages}{103--122},
  \doi{10.1007/978-3-540-88562-7\_11}.

\bibitemdeclare{article}{LaneLT:08}
\bibitem{LaneLT:08}
\bibinfo{author}{C.~Laneve} \& \bibinfo{author}{F.~Tarissan}
  (\bibinfo{year}{2008}): \emph{\bibinfo{title}{A simple calculus for proteins
  and cells}}.
\newblock {\sl \bibinfo{journal}{Theoretical Computer Science}}
  \bibinfo{volume}{404}, pp. \bibinfo{pages}{127--141},
  \doi{10.1016/j.tcs.2008.04.011}.

\bibitemdeclare{article}{MichMM:13}
\bibitem{MichMM:13}
\bibinfo{author}{L.~Michaelis} \& \bibinfo{author}{M.~Menten}
  (\bibinfo{year}{1913}): \emph{\bibinfo{title}{Die {K}inetik der
  {I}nvertinwirkung}}.
\newblock {\sl \bibinfo{journal}{Biochemistry Zeitung}} \bibinfo{volume}{49},
  pp. \bibinfo{pages}{333--369}.

\bibitemdeclare{book}{Miln89}
\bibitem{Miln89}
\bibinfo{author}{R.~Milner} (\bibinfo{year}{1989}):
  \emph{\bibinfo{title}{Communication and concurrency}}.
\newblock \bibinfo{publisher}{Prentice Hall}.

\bibitemdeclare{article}{PintPFMR:07}
\bibitem{PintPFMR:07}
\bibinfo{author}{M.C. Pinto}, \bibinfo{author}{L.~Foss},
  \bibinfo{author}{J.C.M. Mombach} \& \bibinfo{author}{L.~Ribeiro}
  (\bibinfo{year}{2007}): \emph{\bibinfo{title}{Modelling, property
  verification and behavioural equivalence of lactose operon regulation}}.
\newblock {\sl \bibinfo{journal}{Computers in Biology and Medicine}}
  \bibinfo{volume}{37}, pp. \bibinfo{pages}{134--148},
  \doi{10.1016/j.compbiomed.2006.01.006}.

\bibitemdeclare{inproceedings}{PriaPQ:04a}
\bibitem{PriaPQ:04a}
\bibinfo{author}{C.~Priami} \& \bibinfo{author}{P.~Quaglia}
  (\bibinfo{year}{2004}): \emph{\bibinfo{title}{Beta Binders for Biological
  Interactions}}.
\newblock In: {\sl \bibinfo{booktitle}{Proceedings of CMSB 2004}},
  \bibinfo{series}{LNCS 3082}, pp. \bibinfo{pages}{20--33},
  \doi{10.1007/978-3-540-25974-9\_3}.

\bibitemdeclare{article}{PriaPRSS:01a}
\bibitem{PriaPRSS:01a}
\bibinfo{author}{C.~Priami}, \bibinfo{author}{A.~Regev},
  \bibinfo{author}{E.~Shapiro} \& \bibinfo{author}{W.~Silverman}
  (\bibinfo{year}{2001}): \emph{\bibinfo{title}{Application of a stochastic
  name-passing calculus to representation and simulation of molecular
  processes}}.
\newblock {\sl \bibinfo{journal}{Information Processing Letters}}
  \bibinfo{volume}{80}, pp. \bibinfo{pages}{25--31},
  \doi{10.1016/S0020-0190(01)00214-9}.

\bibitemdeclare{article}{RegeRPSCS:04a}
\bibitem{RegeRPSCS:04a}
\bibinfo{author}{A.~Regev}, \bibinfo{author}{E.~Panina},
  \bibinfo{author}{W.~Silverman}, \bibinfo{author}{L.~Cardelli} \&
  \bibinfo{author}{E.~Shapiro} (\bibinfo{year}{2004}):
  \emph{\bibinfo{title}{{BioAmbients}: an abstraction for biological
  compartments}}.
\newblock {\sl \bibinfo{journal}{Theoretical Computer Science}}
  \bibinfo{volume}{325}, pp. \bibinfo{pages}{141--167},
  \doi{10.1016/j.tcs.2004.03.061}.

\bibitemdeclare{article}{RegeRS:02}
\bibitem{RegeRS:02}
\bibinfo{author}{A.~Regev} \& \bibinfo{author}{E.~Shapiro}
  (\bibinfo{year}{2002}): \emph{\bibinfo{title}{Cellular abstractions: Cells as
  computation}}.
\newblock {\sl \bibinfo{journal}{Nature}} \bibinfo{volume}{419}, p.
  \bibinfo{pages}{343}, \doi{10.1038/419343a}.

\bibitemdeclare{article}{SanfSGP:10}
\bibitem{SanfSGP:10}
\bibinfo{author}{K.R. Sanft}, \bibinfo{author}{D.T. Gillespie} \&
  \bibinfo{author}{L.R.Petzold} (\bibinfo{year}{2010}):
  \emph{\bibinfo{title}{Legitimacy of the stochastic Michaelis-Menten
  approximation}}.
\newblock {\sl \bibinfo{journal}{IET Systems Biology}} \bibinfo{volume}{5}, pp.
  \bibinfo{pages}{58--69}, \doi{10.1049/iet-syb.2009.0057}.

\bibitemdeclare{book}{SegeS:93}
\bibitem{SegeS:93}
\bibinfo{author}{I.~H. Segel} (\bibinfo{year}{1993}):
  \emph{\bibinfo{title}{Enzyme Kinetics: Behaviour and Analysis of Rapid
  Equilibrium And Steady-State Enzyme Systems}}.
\newblock \bibinfo{publisher}{Wiley Blackwell}.

\bibitemdeclare{article}{SegeSS:89}
\bibitem{SegeSS:89}
\bibinfo{author}{L.A. Segel} \& \bibinfo{author}{M.~Slemrod}
  (\bibinfo{year}{1989}): \emph{\bibinfo{title}{The Quasi-Steady-State
  Assumption: A Case Study in Perturbation}}.
\newblock {\sl \bibinfo{journal}{SIAM Review}} \bibinfo{volume}{31}, pp.
  \bibinfo{pages}{446--477}, \doi{10.1137/1031091}.

\bibitemdeclare{article}{ZagaZKK:04}
\bibitem{ZagaZKK:04}
\bibinfo{author}{A.~Zagaris}, \bibinfo{author}{H.G. Kaper} \&
  \bibinfo{author}{T.J. Kaper} (\bibinfo{year}{2004}):
  \emph{\bibinfo{title}{$\!$Analysis $\!$of $\!$the $\!$Computational
  $\!$Singular $\!$Perturbation $\!$Reduction $\!$Method $\!$for $\!$Chemical
  $\!$Kinetics}}.
\newblock {\sl \bibinfo{journal}{$\!$Journal $\!$of $\!$Nonlinear $\!$Science}}
  \bibinfo{volume}{$\!$14}, pp. \bibinfo{pages}{$\!$59--91},
  \doi{10.1007/s00332-003-0582-9}.

\end{thebibliography}

\end{document}